\documentclass[12pt]{article}

\usepackage{epsfig}
\usepackage{amssymb}
\usepackage{amsfonts}

\usepackage{color}

\def\be{\begin{equation}}
\def\ee{\end{equation}}
\def\ba{\begin{array}{c}}
\def\ea{\end{array}}

\newcommand{\bea}{\begin{eqnarray}}
\newcommand{\eea}{\end{eqnarray}}

\newcommand{\kt}{\rangle}

\newtheorem{thm}{Theorem}
\newtheorem{cor}[thm]{Corollary}
\newtheorem{lemma}[thm]{Lemma}

\newenvironment{proof}{\noindent {\bf Proof}}{\hfill$\square$\vspace{3mm}\endtrivlist}

\begin{document}

\titlepage

 \begin{center}{\Large \bf

Asymptotic non-Hermitian
degeneracy phenomenon and its
exactly solvable simulation

  }\end{center}

 \begin{center}

\vspace{6mm}

  {\bf Miloslav Znojil} $^{1,2,3}$

\end{center}

\vspace{6mm}

  $^{1}$
 The Czech Academy of Sciences,
 Nuclear Physics Institute,
 \v{R}e\v{z} 292,
250 68 Husinec, Czech Republic, {e-mail: znojil@ujf.cas.cz}

 $^{2}$
 Department of Physics, Faculty of
Science, University of Hradec Kr\'{a}lov\'{e}, Rokitansk\'{e}ho 62,
50003 Hradec Kr\'{a}lov\'{e},
 Czech Republic

%
%

$^3$
School for Data Science and Computational Thinking, Stellenbosch
University, 7600 Stellenbosch,
 South Africa


\newpage

\subsection*{Abstract}

A conceptually
consistent \textcolor{black}{understanding is sought} for the
interactions sampled by the
imaginary cubic oscillator \textcolor{black}{with}
potential $V^{(ICO)}(x)={\rm i}x^3$
which
is, by itself, not acceptable \textcolor{black}{as a meaningful quantum model}
due to a combination of its
non-Hermiticity, unboundedness and, first of all,
due to the Riesz-basis non-diagonalizability
of the Hamiltonian
known as its
intrinsic exceptional point (IEP) feature.
For the purposes of a
\textcolor{black}{perturbation-theory-based}
simulation of
the emergence
of
such a singular \textcolor{black}{system},
a simplified
\textcolor{black}{(though not too strictly related)}
toy-model
Hamiltonian is proposed.
It combines
an
$N-$point
discretization of the real line of coordinates
with an {\it ad hoc\,}
interaction in a two-parametric
$N$-by-$N$-matrix
Hamiltonian $H=H^{(N)}(A,B)$.
After such a simplification,
\textcolor{black}{one
can still encounter a somewhat weaker
form of
non-diagonalizability
at} the \textcolor{black}{conventional}
Kato's exceptional-point (EP)
\textcolor{black}{limit of parameters}
$(A,B) \to (A^{(EP)},B^{(EP)})$.
The IEP-non-diagonalizability phenomenon itself
appears
mimicked
by the less enigmatic
EP degeneracy
of the discrete toy-model,
\textcolor{black}{especially at
large $N \gg 1$}.
\textcolor{black}{What we gain is that the}
regularization  \textcolor{black}{of
the simplified toy-model
in a vicinity}
of the \textcolor{black}{conventional
EP
becomes, in contrast to the IEP case,} feasible.


\subsection*{Keywords:}

non-Hermitian quantum physics of degeneracy phenomena;
degeneracies realized as the Kato's exceptional points (EPs);
construction of EPs in a solvable $N$ by $N$ matrix model;
mechanism of removal of the degeneracy via an {\em ad hoc\,} perturbation;
\textcolor{black}{re-confirmation of the
unphysical intrinsic-EP status} of the imaginary cubic oscillator

 \newpage

\section{Introduction\label{introduct}}

The concept of asymptotic non-Hermitian
degeneracy called
intrinsic exceptional point (IEP)
was introduced
by Siegl and Krej\v{c}i\v{r}\'{\i}k \cite{Siegl}.
They \textcolor{black}{detected the presence of} this characteristics
in the imaginary
cubic oscillator (ICO)
operator
 \be
 H^{(ICO)}= -\frac{d^2}{dx^2}+{\rm i}x^3\,.
 \label{icos}
 \ee
They proved that its eigenstates $|\psi^{(ICO)}_n\kt$
do not form Riesz basis,
so they concluded
that
such an operator cannot be considered diagonalizable
and that
``there
is no quantum-mechanical Hamiltonian
associated with it''~\cite{Siegl}.
In this context
we reconfirmed,
in  \cite{foundations}, that
the applicability
of quantum theory
at the non-Hermitian IEP dynamical extreme
\textcolor{black}{is questionable}.
We found that
many sophisticated construction
techniques which still do work
\textcolor{black}{for finite matrices}
cannot be transferred
to the
IEP dynamical regime.

A brief explanation is that near an $N$-by-$N$-matrix
singularity \textcolor{black}{with $N < \infty$},
the ``corridor of unitarity''  (cf. \cite{corridors})
becomes,
in the $N \to \infty$ limit, unacceptably
(i.e., exponentially) narrow.
As a consequence,
we felt forced to conclude that ``the
practical realizations of the standard quantum-mechanical
ICO model
remain \ldots
elusive'' \cite{foundations},
and that
``the currently unresolved status  \ldots
of the IEP-related
instabilities
does not seem to have an easy resolution'' \cite{foundations}.

In
our present paper we
intend to propose another (partial) \textcolor{black}{clarification
of the puzzle}.
The not quite expected existence of such a \textcolor{black}{new
explanation of the problem}
has two roots. First,
we imagined that
in the ICO model of Eq.~(\ref{icos})
as well as in its multiple IEP-singular analogues,
a regularization
of their IEP singularity
cannot proceed directly, via
an immediate perturbative regularization of the IEP-singular
differential operators themselves.
In a ``preparatory-step'' of our proposed
treatment
of the problem
\textcolor{black}{we decided to
mimic the relevant features od $H^{(ICO)}$ via}
certain limits of its discrete-coordinate $N$ by $N$ matrix
alternatives.
\textcolor{black}{For this purpose we are going to consider certain}
$N$-by-$N$-matrix Hamiltonians
of the form $H^{(N)}=T^{(N)} + V^{(N)}$
in which the
kinetic energy is represented by the
discrete Laplacean,
 \be
 T^{(N)}
  =
 \left[ \begin {array}{ccccc}
 2&-1&0&\ldots&0\\
 \noalign{\medskip}-1&2&-1&\ddots&\vdots\\
 \noalign{\medskip}0&\ddots&\ddots &\ddots&0\\
 \noalign{\medskip}\vdots&\ddots&-1& 2&-1\\
  \noalign{\medskip}0&\ldots&0&-1&  2
  \end {array} \right]\,.
 \label{Ka8pV}
 \ee
Such an operator is, for any local and real potential $V^{(N)}=V^{(N)}(x)$, Hermitian.
Moreover,
\textcolor{black}{its kinetic-energy component (\ref{Ka8pV})
can be interpreted as an immediate
discrete analogue of its continuous-coordinate partner $T=-d^2/dx^2$
(cf., e.g., the recent preprint \cite{Dusan}
for details and/or further references;
here we use units such that $\hbar=1$ and mass $m=1/2$)}.

The second root and step of our present
IEP-simulation proposal
was inspired by our other
paper \cite{jmp1}.
In a way proposed in this paper,
Hamiltonians \textcolor{black}{$H^{(N)}=T^{(N)} + V^{(N)}$}
can be also perceived as the
discrete-coordinate analogues
of \textcolor{black}{certain more general non-Hermitian}
differential-operator partners as sampled by the ICO model of
Eq.~(\ref{icos}): The correspondence may be mediated, e.g., by the
evaluations of a suitable given potential at the grid-point
coordinates $x_1$, $x_2$, \ldots, $x_N$
(for more details see section \ref{primark} below).
\textcolor{black}{This may be expected to simplify the analysis.
Thus, in \cite{jmp1} we} studied these approximants, with
the basic message being,
from our present point of view, just
a mixture of good and bad news
(see also
a \textcolor{black}{few further related comments below}).

What remained encouraging was
an observation
that
the construction
of the necessary
$N$ by $N$ matrix
approximants
might be a feasible task,
especially when one admits
an ample use of
computer-assisted symbolic manipulations.
In parallel,
a new discouragement emerged
when we noticed that
with the growth of $N$,
the memory-and-time
costs of the
construction appeared to
increase so quickly
that we were only able to
test its performance
up to comparatively small
$N=6$.
Thus, we
had to
conclude, temporarily,
that the desirable ``extrapolation
to any hypothetical continuous-coordinate limit $N \to \infty$
does not seem to be feasible at present'' \cite{jmp1}.

Very recently, our skepticism faded away
when we imagined that
for a
simulation of the IEP degeneracy
it is in fact not necessary for us to insist on the
\textcolor{black}{entirely general form of
the family of the discrete potentials $V^{(N)}(x_i)$, especially because the}
IEP degeneracy itself
is merely an asymptotic, higher-excitation phenomenon.
The \textcolor{black}{consequences of the}
growth of our integer parameter $N$
\textcolor{black}{could equally well be studied, therefore,
using any suitable toy-model $V^{(N)}(x_i)$.}

This idea immediately led to an innovated formulation of our
``partial-remedy''
project (cf. section \ref{upatabe}).
The essence of the innovation was that
for the purposes of the IEP-degeneracy simulation
during the continuous-coordinate limit $N \to \infty$,
the \textcolor{black}{structure of the degeneracy}
itself may and will be kept elementary and fixed.
This led to the
results which will be
described in sections
\ref{drittark}
and~\ref{patabe},
with a compact summary added
in section \ref{finis}.


\section{\textcolor{black}{Difference-operator Hamiltonians } \label{primark}}

Although the IEP-singular ICO operator (\ref{icos})
cannot serve as
quantum Hamiltonian,
its eigenstates
form a complete
set \cite{Siegl}
so that one feels tempted
to
regularize the model
using a suitable perturbation.
Along such a path of considerations
(see, e.g., \cite{foundations})
one almost immediately discovers that
besides the IEP-related asymptotic degeneracy (\ref{aprode}),
there emerges also another and, possibly,
equally serious technical obstacle
which
is related to the unboundedness of
the operator.
This is a challenge which
was already known to Dieudonn\'{e}
(cf.  \cite{Dieudonne}).
One of its
resolutions was recommended and described in review
\cite{Geyer} (cf. also \cite{Fabio}).
\textcolor{black}{As we already mentioned,}
a not too dissimilar
trick
based on the discretization of coordinates
will be also used in what follows.

\subsection{\textcolor{black}{Discrete large$-N$ version of}
square well}

%
%
\begin{figure}[h]                    
\begin{center}                         
\epsfig{file=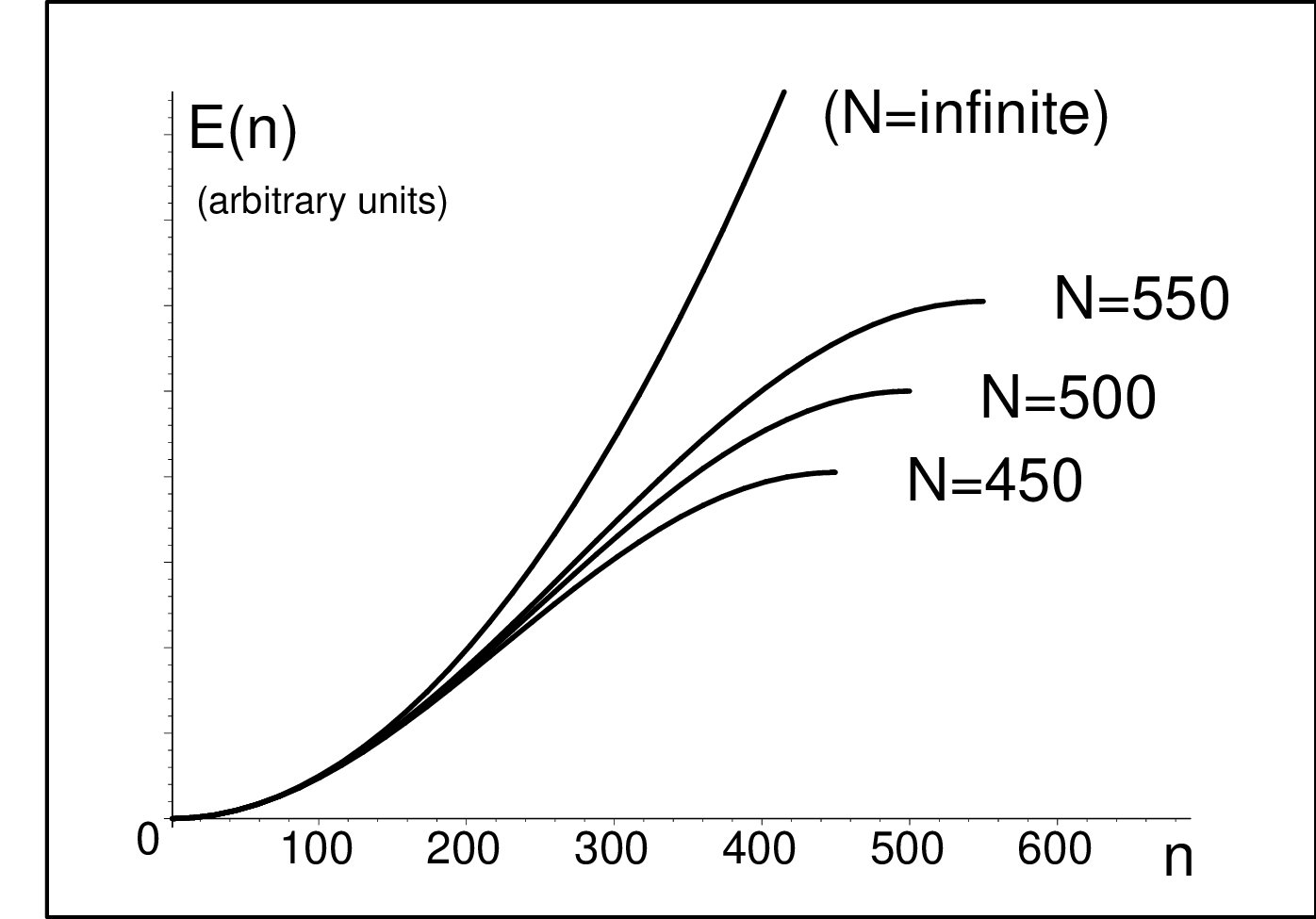,angle=270,width=0.35\textwidth}
\end{center}    
\caption{Curves $E(n)$ fitting the discrete-square-well spectra
(\ref{fofo}) of Schr\"{o}dinger equation living on an equidistant
grid-point lattice of a fixed length and variable \textcolor{black}{mesh-size}
$\lambda$.
 \label{globe}}
\end{figure}

%


%
\begin{figure}[h]                    
\begin{center}                         
\epsfig{file=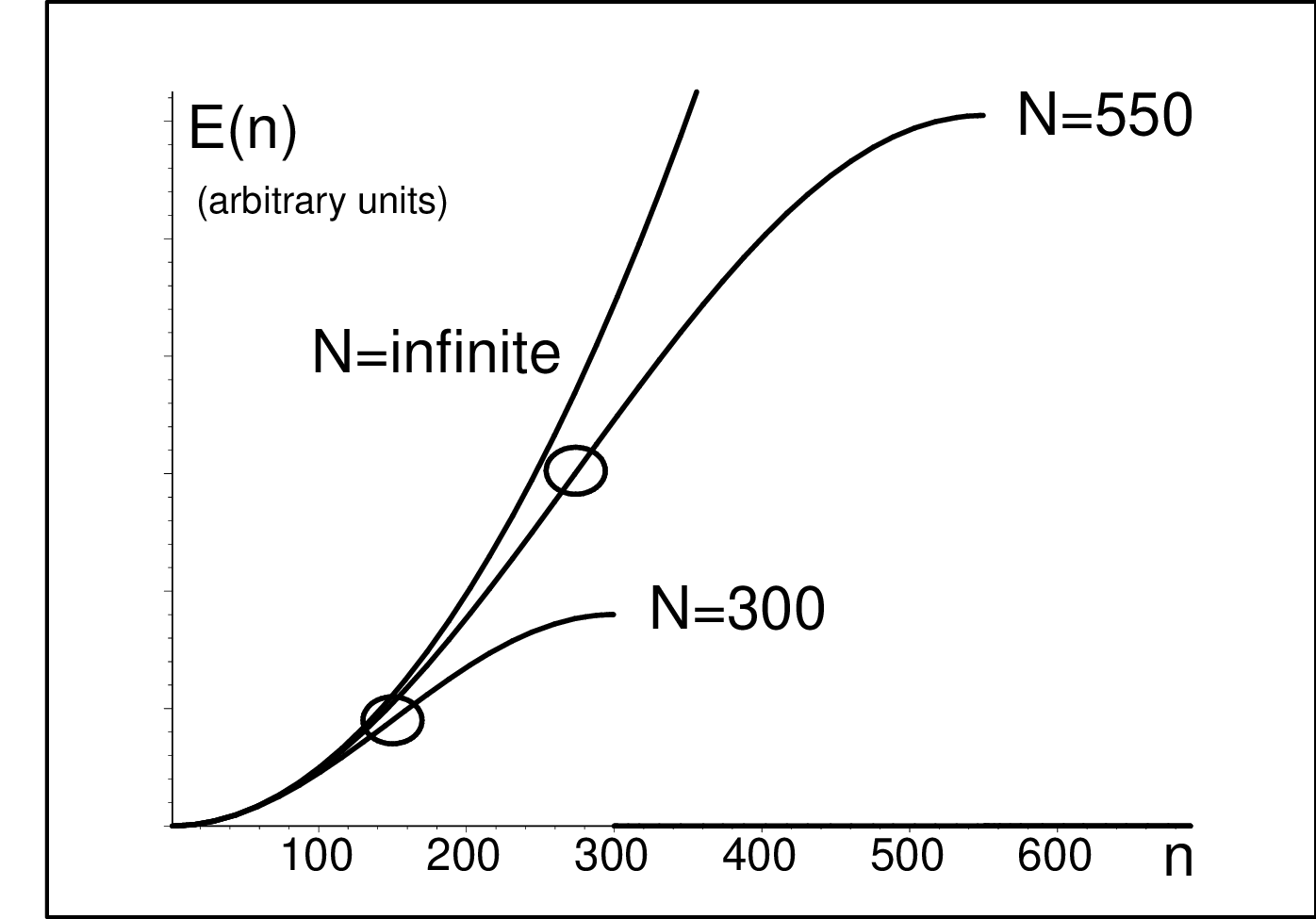,angle=270,width=0.35\textwidth}
\end{center}    
\caption{Same as Figure \ref{globe}, with the small circles marking the
central level. Beyond this ``privileged'' level,
the $N < \infty$ approximation
of the continuous-coordinate deep-square-well spectrum
ceases to be meaningful.
\textcolor{black}{At any sufficiently large $N < \infty$,
the central level can be perceived as
mimicking a typical ``highly excited'' state
of the $N=\infty$ system with $n \gg 1$.}
 \label{eglobe}}
\end{figure}

   \noindent
At a fixed  $N <
\infty$ \textcolor{black}{in
Schr\"{o}dinger equation
 \be
 \left [
 T^{(N)} + V^{(N)}
 \right ]|\psi_n\kt =  E(n) \, |\psi_n\kt\,
 \ \ \ \ n=1,2,\ldots
 \label{epro}
 \ee
}we may
\textcolor{black}{start our considerations by letting the
potential,
inside a finite interval of
a fixed length $L$, vanish, $V^{(N)}=V^{(N)}_{square\,well}=0$.
Then, the}
energy spectrum becomes strictly
positive and can be written
in closed form.
\textcolor{black}{Interested readers may
find the details in
the recent preprint
\cite{Dusan} -- for our present purposes,
it is sufficient to recall just the ultimate
energy-specifying formula
Nr. 3.15 of {\it loc. cit.}, viz.,}
 \be
 E^{(square\,well)}(n) =
 \left \{ \frac{ \sin [\pi\,n\,\lambda/L
 ]}{\lambda}
 \right \}^2\,,\ \ \ n = 1, 2, \ldots\,.
  \label{fofo}
 \ee
\textcolor{black}{The real width
of the infinitely deep square well
is equal here to the product $L=N\,\lambda$
of $N$ with the distance  $\lambda$
between the grid points.
This means that the decrease of $\lambda \to 0$
is equivalent to the growth of $N \to \infty$.}


Two aspects of the latter formula are relevant.
First, the
formula reproduces the continuous-coordinate square-well
spectrum  in the limit of $N \to \infty$
(see Figure \ref{globe}).
Second, the {\em whole}
continuous-coordinate square-well spectrum can be perceived as a
limit of the mere lower part of the respective $N < \infty$
spectrum (see Figure \ref{eglobe}).

\subsection{Inclusion of non-Hermitian potentials}

One of the characteristic features of the
square-well spectrum with $N=\infty$ is the steady growth of the
energy-level differences
${E_{n+1}^{(sqw)}}-{E_n^{(sqw)}}$ with $n$.
The inspection of the pictures reveals that at any finite
$N < \infty$,  such a feature is inadvertently lost beyond
$n=n_{\max}$ where $n_{\max}  =[ N/2]$. Thus, once one
decides to
\textcolor{black}{take a nontrivial $V^{(N)} \neq 0$
and }
study the spectra at the large $N\gg 1$, one can still safely
ignore the excited states with $n>n_{\max}$
as irrelevant.

For our present purposes it is important to know that all of
the latter observations
remain applicable especially after the inclusion of
small perturbations $V^{(N)} \neq 0$.
One should only keep in mind that
\textcolor{black}{even then}, the
spectrum need not be positive definite.
In this sense the above-mentioned \textcolor{black}{idea of the}
asymptotic irrelevance
of the upper half of the spectrum
of the discrete-coordinate toy models with
finite $N < \infty$
\textcolor{black}{must be properly modified}.

The elusive IEP degeneracy can be now
perceived as paralleled and mimicked by its
finite$-N$ simulation mediated by
the Kato's degeneracy-causing exceptional points
(EP, \cite{Kato}) or, more precisely,
by the Kato's exceptional points
of order $M$ (EPM). In this direction
we already pointed out, in \cite{foundations}, that
the ICO-related singularity finds its
analogue in the behavior of eigenstates $|\psi_n\kt$ of a
parameter-dependent non-Hermitian matrix
$H^{(N)}=H^{(N)}(\kappa)$ in the vicinity of
its exceptional-point singularity
$H^{(N)}(\kappa^{(EPM)})$ at a suitable
\textcolor{black}{degree of degeneracy}
$M\geq 2$.

Some of these observations were already developed in
paper \cite{jmp1}. We revealed there that a consequent and
systematic analysis of the multiparametric
and ${\cal PT}-$symmetric
matrix models
$H^{(N)}=T^{(N)}+V^{(N)}$ of the form
depending on a real $[N/2]-$plet of parameters, viz.,
 \be
 H^{(N)}=
 H^{(N)}(A,B,C,\ldots) =
 \left[ \begin {array}{ccccccc}
  -iA&-1&0
 &\ldots&&\ldots&0
 \\
 {}-1&-iB&-1&0&\ldots&\ldots&0
 \\
 {}0&-1&-iC&-1&\ddots&
 &\vdots
 \\
 {}0&0&\ddots&\ddots&\ddots&0&0
 \\
 {}\vdots&&\ddots&-1&iC&-1
 &0
 \\
 {}0&\ldots&\ldots&0&-1&iB&-1
 \\
 {}0&\ldots&&\ldots&0&-1&iA
 \end {array} \right]\,
 \label{ba0t}
 \ee
ceases to be easy even in the first
nontrivial three-parametric case with  $N=6$ and $M=6$.
On these grounds we decided
to circumvent the obstacles via
a
weakening of the strength of the assumptions. In place of the
over-ambitious search for the EPMs ``with a sufficiently large
$M$'', we will consider a reduced task in which the number of the free
parameters (determining also the
maximal order $M$ of the EPM degeneracies)
will be kept fixed and restricted to the first
few smallest integers.

\subsection{\textcolor{black}{Purely imaginary discrete potentials}}

In any ordinary differential
Schr\"{o}dinger equation $H\,\psi_n=E_n\,\psi_n$
and for
an arbitrary unbounded Hamiltonian
$H=-d^2/dx^2+V(x)$,
as we already pointed out,
the real line of coordinates $x \in \mathbb{R}$
(or, possibly, its finite segment) 
may be replaced by a discrete and equidistant
(and, say, finite)
grid-point lattice
$\{x_1,x_2,\ldots,x_N\}$.
In such a setting, a return
to the continuous-coordinate limit
can be mediated by the growth of $N \to \infty$
in combination with a
simultaneous decrease of the grid-point distance $\lambda \sim 1/N$.

For the special, ICO-inspired
class of models of our present interest,
the continuous-coordinate potentials
exhibiting the parity-time symmetry $H{\cal PT}={\cal PT}H$
will be assumed
purely imaginary.
Then,
the bounded-operator avatar of the initial Hamiltonian with such a
symmetry
acquires the matrix form of Eq.~(\ref{ba0t}) exhibiting the same
antilinear symmetry.

The knowledge of the parameters
might enable us to reconstruct
\textcolor{black}{(or, better,
interpolate/approximate)}
the continuous-coordinate potential
$V(x)$ in the limit of large $N \to \infty$.
For the sake of definiteness, we will assume that
all of the constants $A$, $B$, \ldots are real.
At a fixed $N$
our non-Hermitian but ${\cal PT}$ symmetric quantum Hamiltonian will then
have the $N$ by $N$ matrix form (\ref{ba0t}). Its spectrum
will be real, discrete and non-degenerate in an $N-$dependent domain
${\cal D}$ of
parameters which may be called ``physical''.
This domain is
not empty since it contains a subset of the
parameters which remain sufficiently small.
Model (\ref{ba0t}) can be then perceived as a perturbation of the
conventional square well with Hamiltonian
$H^{(N)}_{(sqw)}={H^{(N)}}(0,0,\ldots) $
represented by a matrix which is Hermitian, $H^{(N)}_{(sqw)}= \left [
H^{(N)}_{(sqw)}\right ]^\dagger$.

\subsection{Differential-difference-operator correspondence}

Incidentally, model (\ref{icos})
played, for years, the role of an
important benchmark example in
several branches of physics \cite{Fisher,DB,BB,Carl}.
The disproof of its probabilistic quantum-mechanical
tractability
was, therefore, disturbing.
Its unacceptability
has also been reconfirmed by G\"{u}nther with
Stefani \cite{Uwe} who provided a ``clear complementary evidence''
that the models of such a type ``are not equivalent to Hermitian
models'', mainly due to the IEP property. They called this property
``non-Rieszian mode behavior'' and, in a way which
inspired also our present constructive
considerations, they concluded that
``what is still lacking is a simple physical explanation
scheme for the non-Rieszian behavior of the
eigenfunction sets'' \cite{Uwe}.

We found it important that the
latter IEP-degeneracy property
of the eigenstates
$|\psi^{(ICO)}_n\kt$ of $H^{(ICO)}$
can be visualized, roughly speaking,
as a steady weakening of their mutual linear independence
with the growth of excitation,
 \be
 |\psi^{(ICO)}_n\kt \approx |\psi^{(ICO)}_{n+1}\kt
\,,\ \ \ \ n \gg 1 \,.
 \label{aprode}
 \ee
A deeper insight 
is obtained when
one recalls the notion
of exceptional point (EP) as introduced in
the classical Kato's
monograph on perturbation theory \cite{Kato}.
In spite of the fact that the Kato's attention has only
been restricted
to the finite and
parameter-dependent
$N$ by $N$ matrices
$H^{(N)}(\kappa)$,
his concept of the EP degeneracy
can really be found analogous
to its ``intrinsic'', asymptotic
version of Eq.~(\ref{aprode}).

The analogy
\textcolor{black}{is imperfect of course.
It only finds a partial support in the fact}
that
at a finite $N<\infty$,
the EP
can be also defined
as an instant of parallelization
of several (i.e., in general, of $M$)
eigenvectors
$|\psi^{(N)}_n(\kappa)\kt$
of $H^{(N)}(\kappa)$ at
$\kappa = \kappa^{(EPM)}$
with suitable $M \leq N$.
In comparison with IEP,
the \textcolor{black}{other} formal difference is that in the
EP limit (i.e., in the EPM limit)
$\kappa \to \kappa^{(EPM)}$,
the Kato's finite$-N$ degeneracy
involves not only an $M-$plet of
certain properly normalized eigenvectors,
 \be
 \lim_{\kappa \to \kappa^{(EPM)}}|\psi_{m_j}^{(N)}(\kappa)\kt=
 |\psi_{(EPM)}^{(N)}\kt\,,
 \ \ \ \
 j=1,2,\ldots,M
 \,,
 \label{wadeg}
 \ee
but also
the related eigenvalues,
 \be
 \lim_{\kappa \to \kappa^{(EPM)}}E_{m_j}^{(N)}(\kappa)=
 E_{(EPM)}^{(N)}\,,
 \ \ \ \
 j=1,2,\ldots,M
 \,.
 \label{endeg}
 \ee
In  \cite{foundations} we found the
``simplification'' provided by the absence of the
energy-degeneracy (\ref{endeg})
to be more than compensated,
in the
IEP context, by the technical
complications arising from the
``non-Rieszian mode behavior''~(\ref{aprode}).
Siegl with Krej\v{c}i\v{r}\'{\i}k
emphasized,
for this reason, that
the IEP
singularity is ``much stronger than any EP
associated with finite Jordan blocks'' \cite{Siegl}.
\textcolor{black}{Thus, our present tentative
replacement of the IEP-singular systems
by their discrete EPM analogues
could help us to understand at least some of the
deeper mathematical roots of the
rather subtle IEP-unacceptability
enigma}.

\section{The Kato's exceptional points at finite $N$\label{upatabe}}


\subsection{The domains of unitarity}

In our present paper we intend to
study the regularization of the IEP models
based on a simulation of their properties
using matrices  (\ref{ba0t}).
Our motivation is \textcolor{black}{threefold}.
Firstly, we feel impressed by the
empirical observation that all of these matrices
share
certain not quite expected ``exact solvability'' features.
Secondly,
we find it important that
all of these models seem to
admit a
\textcolor{black}{fall into an EPM singularity via a}
smooth,
unitarity-non-violating
passage
through a strictly physical domain ${\cal D}$.
\textcolor{black}{Forcing the system to evolve into}
a
loss-of-the-observability collapse,
i.e., towards the mathematically rather specific
EPM-singularity extreme.

\textcolor{black}{Thirdly,
besides the most elementary
discrete-coordinate local-potential physics
behind model (\ref{ba0t}),
the study of the
same finite-matrix form of a realistic Hamiltonian
could also find its motivation in a
different phenomenological background.
{\it Pars pro toto}, let us mention paper \cite{chainb}
in which
Jin and Song introduced
such a
Hamiltonian (\ref{ba0t})
(in the special case with vanishing $B=C=\ldots=0$)
as a tight-binding chain
model
with possible applications
in condensed matter physics
say, of the Bloch's electronic systems with impurities.
These authors also mentioned that the same
one-parametric matrices
$H^{(N)}(A,0,0,\ldots)$
find also another,
entirely different field of applicability
in quantum information theory
where they describe the arrays of qubits.
It is also worth adding that
another immediate one-parametric generalization of the model
has been proposed by Joglekar et al \cite{chainc}.}


\begin{figure}[h]                    
\begin{center}                         
\epsfig{file=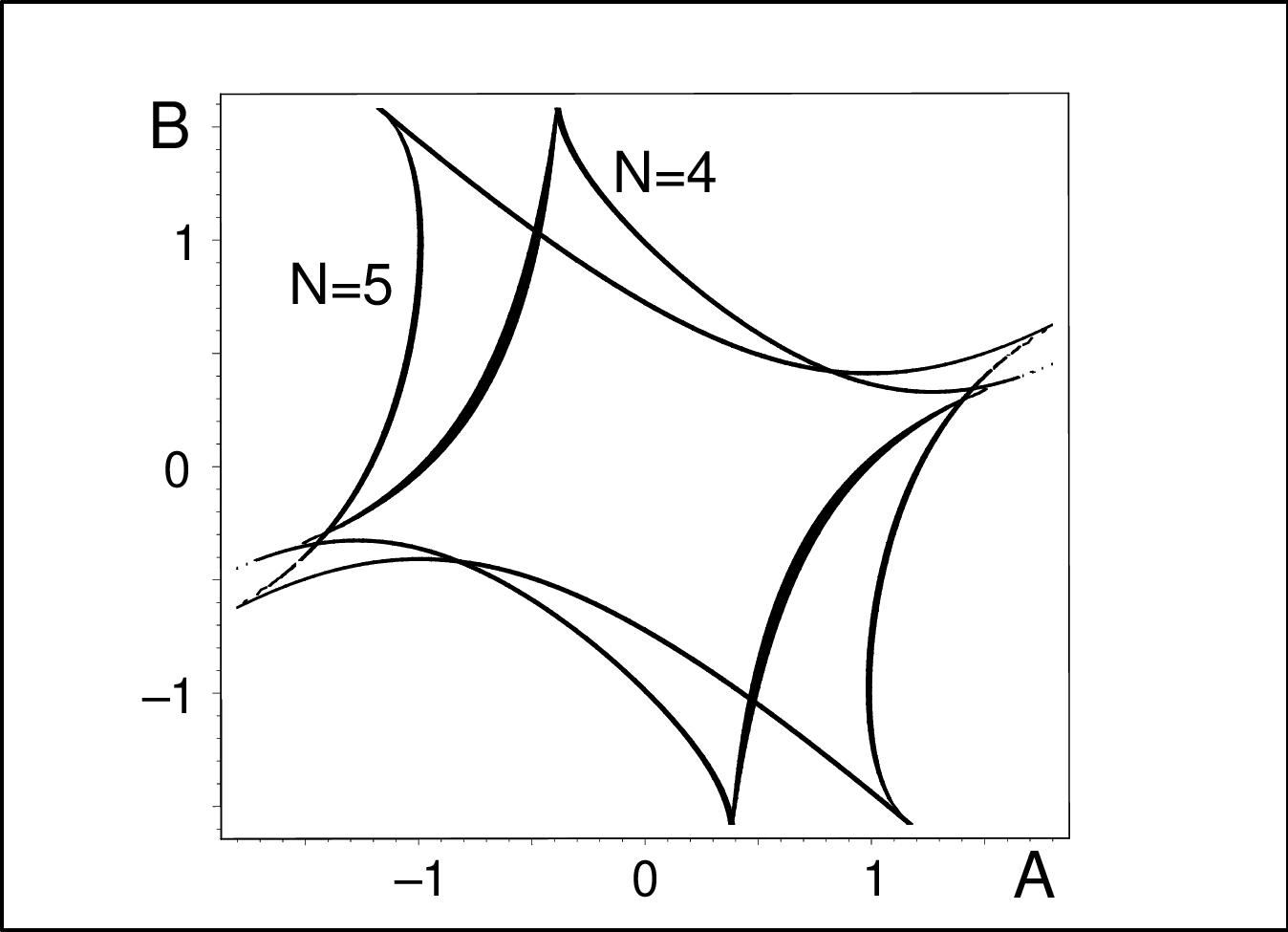,angle=270,width=0.35\textwidth}
\end{center}    
\caption{The star-shaped domains ${\cal D}$ of the reality of spectra
for the first two nontrivial
models (\ref{ba0t}) with $N=4$ and $N=5$
 \label{uglobe1}}
\end{figure}


In the virtually trivial one-parametric
version of our model with
$A \neq 0$ and with vanishing
$B=C=\ldots=0$
the localization of the central EP = EP2 singularity
is feasible, in closed form,
at an arbitrary finite matrix dimension $N$
(cf. \cite{Robin}).
In the further, nontrivial versions of the
two-parametric systems of our present interest, the
localization of the physical domains ${\cal D}={\cal D}^{(N)}$
is, in contrast, not a trivial task anymore. This is
illustrated by Figure \ref{uglobe1}
where we see a perceivable difference
between the shapes of boundaries $\partial {\cal D}^{(N)}$
at the two smallest $N$.
The picture shows that in comparison with $N=4$, the domain
of unitarity is
more protruded and
rotated to the left at $N=5$.

In the light of
the above-cited ``no-go''
observation
that
the choice of the triplet of variable parameters
$A\neq 0$, $B\neq 0$ and $C\neq 0$
leading to the extreme EPM with
$M=6$
already lies beyond
the area of practical feasibility
of the
constructions at the larger matrix dimensions $N>6$ (see \cite{jmp1}),
we are confronted with the remaining open question
concerning the feasibility of the
EPM constructions
at arbitrary $N$
in the two-parametric regime with
$A \neq 0$ and $B \neq 0$ while
$C=D=\ldots=0$.

This is the question
which is to be addressed,
and which will be affirmatively answered
in what follows.

\subsection{Secular polynomials \label{tabe}}

In an overall introduction to our forthcoming
localization of the non-Hermitian EPM degeneracies
we will consider, separately,
the two-parametric $N$ by $N$ matrix
Hamiltonians of Eq.~(\ref{ba0t})
with the even and odd $N$.
The differences are caused by the
${\cal PT}-$symmetry of the matrices
which
enables us to
evaluate the respective energy-dependent  secular
polynomials $P^{(N)}_{(A,B)}(E)$ in the
slightly different respective
forms.

\subsubsection{Even $N=2K$}

\textcolor{black}{Even a brief inspection of Figure \ref{uglobe1}
makes the differences
between the even and odd $N$s clearly visible.
Thus, once we restrict our attention, for the start,
to the even-dimensional models with $N=2K$,
we may immediately reduce the solution of
our basic eigenvalue problem
(\ref{epro})
to the search of the roots of
the related
secular polynomial
 \be
  P^{(2K)}_{(A,B)}(E)=x^K + c_1(A,B)
 x^{K-1}+ c_2(A,B)
 x^{K-2}+
 \ldots
 + c_{K-1}(A,B)x+c_{K}(A,B)\,.
 \label{mikula}
 \ee
By induction we can then immeidately
prove the following
result.}

\begin{lemma}
\label{sudalema}
At  even $N=2K=4,6,8,\ldots$,
the coefficients
in (\ref{mikula})
can be evaluated in closed form, yielding
 \be
 (-1)^K\,c_K(A,B)=(1+AB)^2-A^2\,
 \label{su6}
 \ee
and
 \be
 (-1)^K\,c_{K-1}(A,B)=B^2+K(K-1)A^2/2-(2K-1)-(K-1)(K-2)(1+AB)^2/2
 \label{su7}
 \ee
etc.
\end{lemma}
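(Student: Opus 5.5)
The plan is to use the three-term (continuant) recurrence for tridiagonal determinants together with the ${\cal PT}$ symmetry of (\ref{ba0t}). Cutting the $2K$ by $2K$ matrix $H^{(2K)}(A,B)-E$ between rows $K$ and $K+1$ gives the standard identity
$$
\det\left[H^{(2K)}-E\right]=L_K(E)R_K(E)-L_{K-1}(E)R_{K-1}(E).
$$
Here $L_j$ is the leading principal $j$ by $j$ minor and $R_j$ the trailing one. The product of the off-diagonal entries is $(-1)(-1)=1$.

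The leading minors obey $L_j=-E\,L_{j-1}-L_{j-2}$ for $j\geq 3$, since the diagonal vanishes there when $C=D=\ldots=0$. The initial values are
$$
L_0=1,\qquad L_1=-{\rm i}A-E,\qquad L_2=({\rm i}A+E)({\rm i}B+E)-1 .
$$
By ${\cal PT}$ symmetry, $R_j$ is obtained from $L_j$ by $A\to -A$, $B\to -B$. For real $E$ this means $R_j(E)=\overline{L_j(E)}$, so $P^{(2K)}(E)=|L_K|^2-|L_{K-1}|^2$ up to the overall sign convention $(-1)^{2K}=1$. Since $P^{(2K)}$ is even in $E$, it is a polynomial in $x=E^2$, consistent with (\ref{mikula}).

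For (\ref{su6}) I set $E=0$. The recurrence then becomes $L_j(0)=-L_{j-2}(0)$, with $L_1(0)=-{\rm i}A$ and $L_2(0)=-(1+AB)$. Hence, depending on the parity of $K$, the pair $(L_K(0),L_{K-1}(0))$ equals, up to signs, either $(1+AB,\,{\rm i}A)$ or $({\rm i}A,\,1+AB)$. This gives $c_K=\pm[(1+AB)^2-A^2]$, and I would fix the sign $(-1)^K$ by induction in $K$ in steps of two, checking the base cases $K=2,3$ directly.

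For (\ref{su7}) I expand $L_j(E)=\alpha_j+\beta_jE+\gamma_jE^2+O(E^3)$. Inserting this into the recurrence gives a hierarchy of linear recurrences:
$$
\alpha_j=-\alpha_{j-2},\qquad \beta_j=-\alpha_{j-1}-\beta_{j-2},\qquad \gamma_j=-\beta_{j-1}-\gamma_{j-2}.
$$
Each is the homogeneous period-4 recurrence with a forcing term from the previous order. The solutions are therefore $\beta_j$ linear in $j$ times period-4 signs, and $\gamma_j$ quadratic in $j$, which is the source of the $K(K-1)/2$ and $(K-1)(K-2)/2$ factors. The $E^2$ coefficient of $|L_K|^2-|L_{K-1}|^2$ is
$$
2{\rm Re}\left(\alpha_K\bar\gamma_K\right)+|\beta_K|^2-2{\rm Re}\left(\alpha_{K-1}\bar\gamma_{K-1}\right)-|\beta_{K-1}|^2 ,
$$
and substituting the closed forms should yield (\ref{su7}).

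The main obstacle is this last step. I would solve the inhomogeneous recurrences separately for $j$ even and odd, starting from $\beta_1=-1$, $\beta_2={\rm i}(A+B)$, $\gamma_1=0$, $\gamma_2=1$. I then have to keep careful track of the period-4 signs and of the real and imaginary parts, so that the mixed terms collapse into $B^2$, $A^2$, $(1+AB)^2$ and the constant $-(2K-1)$. As a safeguard, I would verify the resulting formula against direct expansion of the determinants for $N=4$ and $N=6$, and only then state the general induction.
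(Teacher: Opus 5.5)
Your proposal is correct, and it takes a different, more explicit route than the paper. The paper only asserts an induction on $K$ on the secular polynomial itself. It notes that (\ref{su6}) is $K$-independent, so only (\ref{su7}) needs an induction step. It never writes down the recurrence in $K$ that such an induction needs (e.g.\ relating $P^{(2K+2)}$ to $P^{(2K)}$ after inserting two zero-diagonal sites at the centre), and that argument can only verify formulas already guessed.

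You instead cut the chain at its centre and use ${\cal PT}$ symmetry to write $P^{(2K)}=|L_K|^2-|L_{K-1}|^2$ for real $E$. This reduces everything to half-chain continuants whose Taylor coefficients obey forced period-4 recurrences, so the formulas are derived rather than checked. Two things come cheaply this way:
\begin{itemize}
\item (\ref{su6}) needs no induction at all. $L_j(0)$ alternates between $\pm{\rm i}A$ and $\pm(1+AB)$, and the signs drop out of the moduli, giving $P(0)=(-1)^K[(1+AB)^2-A^2]$ directly. Your planned sign induction is therefore superfluous.
\item The evenness of $P^{(2K)}$, which you only assert, follows from $L_j(-E)=(-1)^j\overline{L_j(E)}$ for real $E$.
\end{itemize}

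The final algebra you flag as the main obstacle does close. Your recurrences and initial data give, for all indices $\geq 1$,
\[
\alpha_{2m}=(-1)^m(1+AB),\quad \alpha_{2m+1}=(-1)^{m+1}{\rm i}A,\quad \beta_{2m}=(-1)^{m+1}{\rm i}(mA+B),\quad \beta_{2m+1}=(-1)^{m+1}(m+1+mAB),
\]
\[
\gamma_{2m}=(-1)^{m+1}\left[m(m+1)+m(m-1)AB\right]/2,\qquad \gamma_{2m+1}=(-1)^m{\rm i}\left[m(m+1)A/2+mB\right].
\]
Inserting these into your $E^2$-coefficient expression $2{\rm Re}(\alpha_K\bar\gamma_K)+|\beta_K|^2-2{\rm Re}(\alpha_{K-1}\bar\gamma_{K-1})-|\beta_{K-1}|^2$ reproduces (\ref{su7}) for both $K=2m$ and $K=2m+1$. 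It agrees, for instance, with the printed $c_2$ at $N=6$ and with $c_3$ at $N=8$ once $C=0$. Your method also extends mechanically to the higher coefficients behind the lemma's ``etc.''.
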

\textcolor{black}{\begin{proof}
is simplified by the $K-$independence of Eq.~(\ref{su6}).
This makes the induction-step elementary because it becomes
restricted just to the single formula of Eq.~(\ref{su7}).
\end{proof}
}


  \noindent
Due to the ${\cal PT}-$symmetry, the EPM mergers
of the $M-$plets of the energies $E_n \to E^{(EPM)}$
of our interest
occur at the very center of the spectrum, $E^{(EPM)}=0$.
This leads to our following,
easily proved key conclusion.

\begin{cor}
\label{sumalemat}
The sufficient condition of the
quadruple spectral degeneracy
at even $N$ (with $M=4$)
has the form of
coupled pair
 \be
 c_{K-1}(A,B)=0\,,
 \ \ \ \ \ \ c_{K}(A,B)=0\,
 \label{ihlarov}
 \ee
of algebraic
polynomial
equations for  $A = A^{(EPM)}$
and  $B = B^{(EPM)}$
in which the respective polynomials
as specified in
Lemma \ref{sudalema}.
\end{cor}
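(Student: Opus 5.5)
The plan is to show first that, for even $N=2K$, the secular polynomial $\det(H^{(2K)}(A,B)-E)$ contains only even powers of $E$. Equation (\ref{mikula}) then reads as a polynomial in $x=E^2$. To see this, use the ${\cal PT}$-symmetry of (\ref{ba0t}). Let ${\cal P}$ be the antidiagonal reversal matrix. Then ${\cal P}H{\cal P}=H^*$ (complex conjugation flips every $-iA$ to $iA$), so the spectrum is invariant under $E\to E^*$. In addition, conjugating by the diagonal signature matrix $S={\rm diag}(1,-1,1,\ldots)$ flips the sign of the off-diagonal $-1$ entries. Hence $-SHS$ has the same off-diagonal part as $H$ and diagonal $-{\rm diag}(H)$, and this diagonal equals ${\rm diag}(H^*)$ for purely imaginary diagonal entries. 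So $-SHS=H^*={\cal P}H{\cal P}$, and $H$ is similar to $-H$. Consequently $\det(H-E)=\det(-H-E)=\det(H+E)$ for even $N$. The characteristic polynomial is therefore even in $E$ with real coefficients, and it takes the form (\ref{mikula}) in $x=E^2$ with $c_j(A,B)$ given by Lemma \ref{sudalema}.

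Next, multiplicity is read off directly. In the variable $E$, $P^{(2K)}_{(A,B)}(E)=E^{2K}+\ldots+c_{K-1}E^2+c_K$. The conditions (\ref{ihlarov}) make the two lowest coefficients vanish, so $P$ is divisible by $E^4$. The energy $E=0$ is then a root of algebraic multiplicity at least four, which is the quadruple degeneracy at the centre of the spectrum, consistent with the remark that ${\cal PT}$-symmetric mergers occur at $E^{(EPM)}=0$. Conversely, these are exactly the conditions for a root of multiplicity at least four at $E=0$. The corollary, however, only asserts sufficiency.

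The substantive step is to pass from algebraic quadruple degeneracy to an EPM with $M=4$, that is, a single Jordan block rather than several eigenvectors. The argument uses the tridiagonal structure of (\ref{ba0t}): all off-diagonal entries equal $-1\neq0$, so $H$ is an unreduced (irreducible) tridiagonal matrix. For any $E$, deleting the first row and last column of $H-E$ leaves an upper triangular submatrix whose diagonal entries are all $-1$, so it is invertible. Hence ${\rm rank}(H-E)\ge N-1$, every eigenvalue has geometric multiplicity one, and the multiplicity-four eigenvalue $E=0$ carries a single $4\times4$ Jordan block. This is the Kato EP4 with the eigenvector coalescence (\ref{wadeg}) and (\ref{endeg}).

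I expect the main obstacle to be the evenness argument in the first step, namely checking carefully that the signature similarity and the reversal ${\cal P}$ combine correctly for the specific two-parameter diagonal $(-iA,-iB,0,\ldots,0,iB,iA)$. If that is in doubt, a fallback is to expand the tridiagonal determinant by the three-term recurrence for the principal minors and verify the parity directly. I would also note that the existence of real solutions $(A,B)$ of (\ref{ihlarov}) is not part of the claim, only that any such pair yields the degeneracy.
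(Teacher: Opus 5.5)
Your argument is correct and is essentially the paper's own: once $P^{(2K)}_{(A,B)}(E)$ is known to be a polynomial in $x=E^2$, the vanishing of $c_K$ and $c_{K-1}$ makes $E^4$ divide it. The paper takes that evenness for granted from ${\cal PT}$-symmetry, whereas you actually prove it through the similarity $-H=(S{\cal P})\,H\,({\cal P}S)$, which incidentally holds for complex $A,B$ as well. Your rank argument for the unreduced tridiagonal matrix goes beyond the paper and shows that its caveat about degeneracies reflecting lower-order EPs cannot occur here. One small correction: the single Jordan block has size equal to the full algebraic multiplicity, which exceeds four whenever, e.g., $c_{K-2}$ also vanishes.
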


 \noindent
The latter \textcolor{black}{Corollary} is slightly formal because
at least some of
the solutions
 $A = A^{(EPM)}$
and  $B = B^{(EPM)}$
of
Eq.~(\ref{ihlarov})
may happen to be complex and, hence, not of our present interest.
\textcolor{black}{Secondly, in principle at least,
some of the $M-$tuple spectral degeneracies
may reflect just the presence of an exceptional point
of the order smaller than $M$ \cite{Kato}.}
An explicit classification
of these \textcolor{black}{degeneracies as well as the}
analysis of the properties
\textcolor{black}{of the energies and/or wave functions
near these boundaries of acceptability}
have to be performed case by case.
The process will be sampled in what follows.

\subsubsection{Odd $N=2K+1$}

At the odd $N=2K+1$,
obviously, it makes sense to set
 $$
 P^{(2K+1)}_{(A,B)}(E)=E\,\phi(E^2)\,.
 $$
where we can still expand
 \be
 \phi(x)=x^K + c_1(A,B)
 x^{K-1}+ c_2(A,B)
 x^{K-2}+
 \ldots
 + c_{K-1}(A,B)x+c_{K}(A,B)\,.
 \label{dmikula}
 \ee
Then, the following result can be deduced.

\begin{lemma}
\label{prvalema}
At odd
 $N=2K+1=5,7,9,\ldots$,
the coefficients
in (\ref{dmikula})
can be evaluated in closed form, yielding
 \be
 (-1)^K\,c_K=(K-1)(1+AB)^2+2-K\,A^2\,
 \label{li8}
 \ee
and
 \be
 (-1)^K\,c_{K-1}=(K-1)B^2+(K+1)K(K-1)A^2/6-K^2-K(K-1)(K-2)(1+AB)^2/6
 \label{li9}
 \ee
etc.
\end{lemma}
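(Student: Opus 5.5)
The plan is to avoid expanding the $(2K+1)$-dimensional determinant directly. Instead I would derive a three-term recurrence in $K$ for the whole secular polynomial and then run an induction on the two lowest coefficients of $\phi$. This is the same strategy as in the proof of Lemma \ref{sudalema}, but here neither coefficient is $K$-independent.

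\emph{Step 1 (recurrence in $K$).} In $H^{(2K+1)}-E$ the parameters $A$ and $B$ sit only in the two top and two bottom diagonal entries. The central block of dimension $m=2K-3$ is the free chain, with diagonal $-E$ and off-diagonals $-1$. I would expand the tridiagonal determinant (a continuant) by cutting the two bonds that join this central block to the boundary $2\times 2$ blocks. This gives
$$P^{(2K+1)}=L_2R_2D_m-(L_2R_1+L_1R_2)D_{m-1}+L_1R_1D_{m-2},$$
where $L_j,R_j$ are the $K$-independent determinants of the boundary pieces and $D_m(E)$ is the free determinant. Since $D_{m}=-E\,D_{m-1}-D_{m-2}$ is a Chebyshev-type recurrence, iterating it twice gives $D_{m+2}=(E^2-2)D_m-D_{m-2}$, and this holds for every shift. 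Because $P^{(2K+1)}$ is a fixed linear combination of $D_m,D_{m-1},D_{m-2}$, it inherits the recurrence:
$$P^{(2K+3)}=(E^2-2)\,P^{(2K+1)}-P^{(2K-1)}.$$
Writing $x=E^2$, this becomes $\phi_{K+1}(x)=(x-2)\phi_K(x)-\phi_{K-1}(x)$.

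\emph{Step 2 (coefficient recurrences).} Set $d_K=(-1)^Kc_K$, the normalized constant term of $\phi_K$, and $e_K=(-1)^Kc_{K-1}$, the normalized coefficient of $x$ in $\phi_K$. Comparing constant terms gives $c_K^{(K+1)}=-2c_K^{(K)}-c_{K-1}^{(K-1)}$, that is,
$$d_{K+1}=2d_K-d_{K-1}.$$
So $d_K$ is affine in $K$, which agrees with the shape of (\ref{li8}). Comparing the coefficients of $x$ gives
$$e_{K+1}=2e_K-e_{K-1}-d_K,$$
up to a sign I must fix carefully. Its second difference is affine in $K$, so $e_K$ is a cubic in $K$, which again matches (\ref{li9}). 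Both recurrences have two-step memory. They are therefore determined by two initial values, and it suffices to check (\ref{li8}) and (\ref{li9}) at $K=2,3$ (that is, $N=5,7$). I would also check $K=1$ ($N=3$, reading the pattern as $L_1R_1$-type boundary terms only) as a consistency test.

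\emph{Step 3 (base cases and closing the induction).} I would compute $P^{(5)}$ and $P^{(7)}$ by hand or symbolically, and read off $c_2,c_1$ and $c_3,c_2$ respectively. Then I would verify that the right-hand sides of (\ref{li8}) and (\ref{li9}) satisfy the two recurrences of Step 2. For (\ref{li8}) this is immediate since it is affine in $K$. For (\ref{li9}) it is a finite-difference identity: the second difference of $(K+1)K(K-1)A^2/6-K(K-1)(K-2)(1+AB)^2/6$ must equal $-d_K$ with the appropriate sign.

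I expect the main obstacle to be bookkeeping of conventions rather than any conceptual step. The paper does not fix whether $P=\det(H-E)$ or $\det(E-H)$, nor how the $(-1)^K$ normalization interacts with the shift $\phi_{K}\to\phi_{K+1}$. A sign slip in Step 2 would turn $2d_K-d_{K-1}$ into something else. I would therefore first pin the convention by matching the $N=5$ computation against (\ref{li8}), and only then fix the signs in the recurrence for $e_K$.
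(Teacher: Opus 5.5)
Your proposal is correct and follows the paper's own argument, which is an induction on $K$ that rests on the tridiagonality of the secular determinant; you make the induction step explicit through $\phi_{K+1}(x)=(x-2)\phi_K(x)-\phi_{K-1}(x)$, and this recurrence does reproduce the paper's $P^{(9)}$ from $P^{(7)}$ and $P^{(5)}$, with the sign in $e_{K+1}=2e_K-e_{K-1}-d_K$ exactly as you wrote it. Two small bookkeeping fixes are needed: the constant term of $\phi_{K+1}$ is $c_{K+1}^{(K+1)}$ rather than $c_K^{(K+1)}$, and in the final check the $-K^2$ term of (\ref{li9}) contributes $-2$ to the second difference, which is required to obtain exactly $-d_K$.
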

\textcolor{black}{\begin{proof}, by induction again,
becomes more complicated due to the explicit $K-$dependence
of both of the formulae.
Still,
no real complications emerge because we may immediately use the
tridiagonality of the matrix
and the
inspection
of the underlying secular determinant.
\end{proof}
}

\begin{cor}
\label{sulilemat}
The sufficient condition of the
quintuple spectral degeneracy at odd $N$ (with $M=5$)
has the form of
coupled pair
 \be
 c_{K-1}(A,B)=0\,,
 \ \ \ \ \ \ c_{K}(A,B)=0\,
 \label{hlarov}
 \ee
of algebraic
polynomial
equations for  $A = A^{(EPM)}$
and  $B = B^{(EPM)}$
in which the respective polynomials
as specified in Lemma \ref{prvalema}.
\end{cor}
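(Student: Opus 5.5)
At odd $N=2K+1$ the ${\cal PT}$-symmetry of (\ref{ba0t}), combined with the antisymmetric placement of the imaginary diagonal entries, makes the secular polynomial odd in $E$. Hence $P^{(2K+1)}_{(A,B)}(E)=E\,\phi(E^2)$ always has the root $E=0$; this is the central level singled out in Figure~\ref{eglobe}. The multiplicity of this root controls the order of degeneracy at $E^{(EPM)}=0$, and I would track it as follows.

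First, I would note that the multiplicity of $E=0$ as a root of $P^{(2K+1)}_{(A,B)}$ equals $1+2m$, where $m$ is the multiplicity of $x=0$ as a root of $\phi$. From (\ref{dmikula}),
\be
\phi(x)=c_K+c_{K-1}x+x^2\,\tilde\phi(x)
\ee
for some polynomial $\tilde\phi$. If $c_K(A,B)=0$ and $c_{K-1}(A,B)=0$ hold simultaneously, then $x^2$ divides $\phi(x)$. Consequently $E^5$ divides $P^{(2K+1)}_{(A,B)}(E)$, i.e. the eigenvalue $E=0$ has algebraic multiplicity at least five. Its coalescence with the neighbouring eigenvalues is exactly the $M=5$ merger at the centre of the spectrum.

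The explicit form of the two conditions (\ref{hlarov}) is then read off from Lemma~\ref{prvalema}. The overall signs $(-1)^K$ there are irrelevant for vanishing. So the pair becomes the two real-coefficient polynomial equations obtained by setting the right-hand sides of (\ref{li8}) and (\ref{li9}) to zero. The common solutions $A=A^{(EPM)}$, $B=B^{(EPM)}$ of these equations are the candidate EP5 parameters.

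The only point needing care is the interpretation of ``spectral degeneracy'' here as algebraic multiplicity, i.e. coalescence of eigenvalues. The statement claims a sufficient condition for the fivefold spectral merger, not for a Jordan block of size five. As the paper remarks after Corollary~\ref{sumalemat}, deciding whether the geometric multiplicity is one (a genuine EP5 rather than a lower-order EP) is left to case-by-case analysis, and it is not part of this claim. I would nevertheless observe that a tridiagonal matrix with nonzero off-diagonal entries $-1$ has every eigenspace one-dimensional, because the first component of an eigenvector determines all the others by recursion. Thus any fivefold eigenvalue of (\ref{ba0t}) automatically corresponds to a single Jordan block. This strengthens the conclusion at no extra cost.
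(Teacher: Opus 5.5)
Your argument is correct and is the reasoning the paper leaves implicit; no separate proof is given there. With $P^{(2K+1)}_{(A,B)}(E)=E\,\phi(E^2)$ as in (\ref{dmikula}), the vanishing of $c_K$ and $c_{K-1}$ gives $x^2\mid\phi(x)$, hence $E^5\mid P^{(2K+1)}_{(A,B)}(E)$, and the explicit equations come from Lemma~\ref{prvalema}.

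Your additional observation that a tridiagonal matrix with nonvanishing off-diagonal entries has only one-dimensional eigenspaces is also correct, and it improves on the paper. It shows that the caveat after Corollary~\ref{sumalemat} about possibly lower-order EPs cannot occur for this family. One small imprecision: the oddness of $P^{(2K+1)}_{(A,B)}$ does not follow from ${\cal PT}$-symmetry alone. It needs, in addition, the sign-alternating similarity $S={\rm diag}(1,-1,1,\ldots)$, which maps $H$ to $-\overline{H}$ because the real part of the diagonal vanishes and the hopping is nearest-neighbour.
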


 \noindent
\textcolor{black}{Now we are finally prepared to move from
the IEP-singular,
manifestly unphysical models with, schematically, $N=\infty$
to their EPM-singular analogues with
a suitable (and not too small) $N < \infty$,
and with just a fixed and not too large $M\leq 4$ or $M\leq 5$.}

\section{Two-parametric models
with arbitrary even $N=2K$\label{drittark}}

One of the beneficial consequences of
the not too complicated structure
of our Hamiltonian matrices (\ref{ba0t}) is that
we always have to search,
irrespectively of the parity of $N$, for the
squared-energy
roots $x=E^2$ of a polynomial
of degree $K$.
At the same time,
a deeper inspection of the problem reveals
a not quite expected structural difference between the
systems with the even $N=2K$ and with the odd $N = 2K+1$.
In and only in the latter case, for example,
there exists an odd central constant energy level $E_0=0$.

For this reason, we will study the respective two subsets
of the Hamiltonians separately.

\subsection{The next-to-elementary Hamiltonian with  $N=6$}

Once we decided to
keep our Hamiltonians just two-parametric,
the most complicated three-parametric $N=6$ model of Ref.~\cite{jmp1}
becomes simplified,
 $$
 H^{(6)}(A,B)=
 T^{(6)}+V^{(6)}(A,B)
 =
\left[ \begin {array}{cc|cc|cc}
-iA&-1&0&0&0&0\\
\noalign{\medskip}-1&-iB&-1&0&0&0\\
\hline
\noalign{\medskip}0&-1&0&-1&0&0\\
\noalign{\medskip}0&0&-1&0
&-1&0\\
\hline
\noalign{\medskip}0&0&0&-1&iB&-1\\
\noalign{\medskip}0&0&0&0&-1&
iA\end {array} \right]
\,.
 $$
It is written here in
partitioned form which emphasizes not only
the absence of the third parameter, $C=0$,
but also a certain enhancement of the symmetry
of the matrix which encourages us to recall the  $N=4$ analysis
as a methodical guide.

\textcolor{black}{At $N=6$,}
our first task is the evaluation
of the secular polynomial
 $$
 P^{(6)}_{(A,B)}(E)=
 {{\it {E}}}^{6}+ \left( -5+{A}^{2}+{B}^{2} \right) {{\it {E}}}^{4}+
 \left( 6-{B}^{2}+2\,BA-3\,{A}^{2}+{B}^{2}{A}^{2} \right) {{\it {E}}}^{
 2}-
 $$
 $$
 -1-2\,BA+{A}^{2}-{B}^{2}{A}^{2}\,.
 $$
Its real and non-degenerate
roots $E_{\pm m}\geq 0$ with $m=1,2,3$
(i.e., the sextuplet of bound-state energies)
could be written in closed form.
The formulae (easily generated using
computer-assisted symbolic manipulations)
become too long for a printed display.
Still, whenever needed,
their graphical presentation remains
instructive and straightforward (cf., e.g., \cite{ixna3}).

\textcolor{black}{In particular, it is important to notice
and emphasize that
the
purely numerically determined
star-shaped domain ${\cal D}$ of the reality of the
energy spectrum
at $N=6$
has been found to be similar and very close to its  $N=4$ predecessor
of Figure~\ref{uglobe1}.
Thus, although we are not
going to provide
a rigorous formal proof
(which
could be based on the
increase-of-precision method of paper
\cite{4a5}
and
would be, therefore, feasible),
we believe that all of the similar numerical tests seem to
confirm
a hypothesis that
the spikes of the boundaries of  ${\cal D}$
really represent the non-degenerate EPM singularities
of order four (at the even $N$s) or five (at the odd $N$s).
}

\subsection{Arbitrary $N=2K$ and the EPMs with $M=4$\label{peta} }

In what follows,
the even matrix dimensions $N=2K$ will be allowed to be
arbitry, rendering
our understanding of the behavior of the
large-matrix models possible.

\begin{figure}[h]                    
\begin{center}                         
\epsfig{file=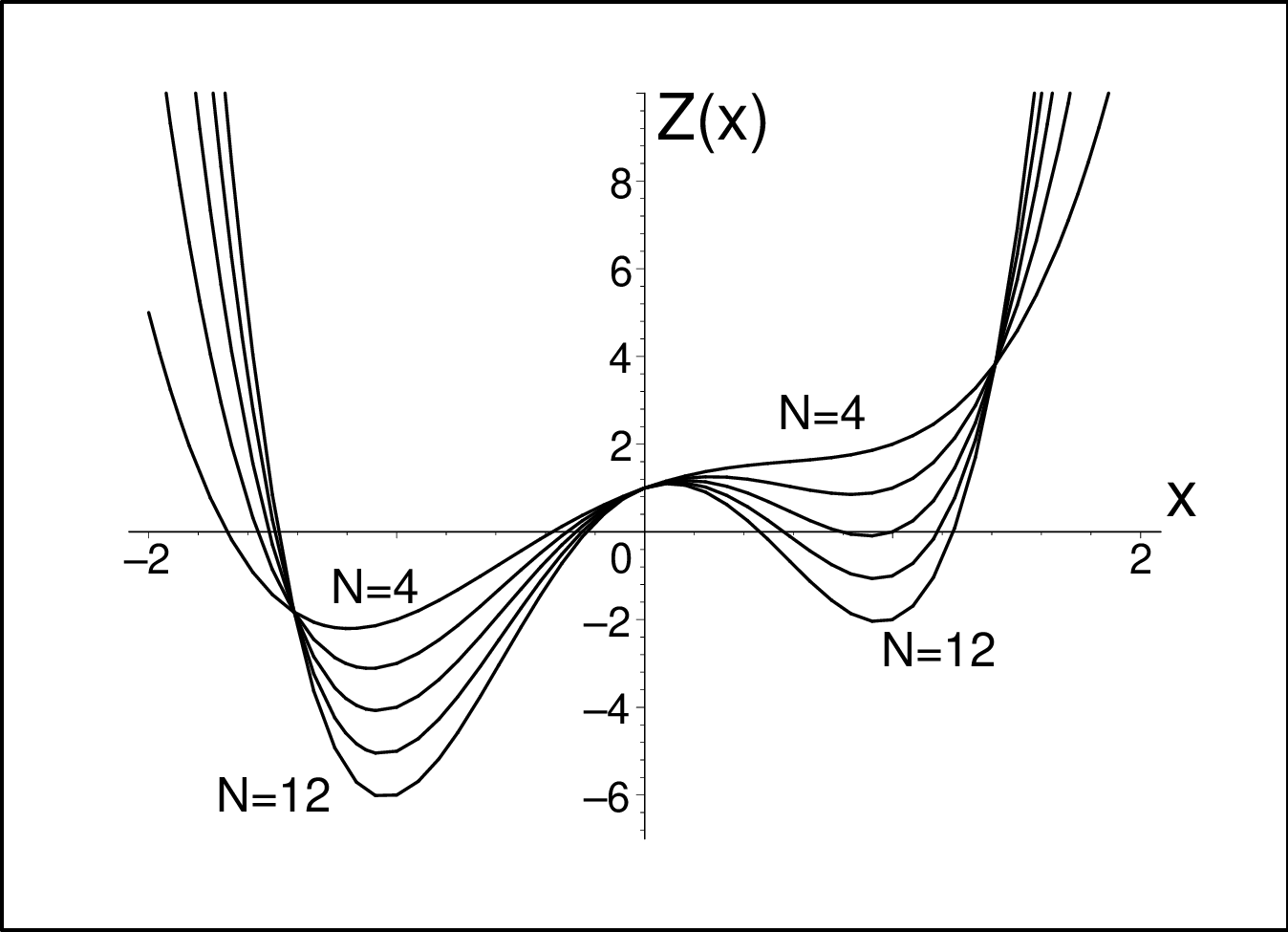,angle=270,width=0.35\textwidth}
\end{center}    
\caption{\textcolor{black}{The graphs of
polynomials $Z_{(-N)}(x)$ of Eq.~(\ref{sy11b}) with $x=x_-$, $N=2K$ and}
$K = 2,3,4,5,6$.
 \label{globe1}}
\end{figure}

\begin{thm}
\label{EP4s}
The degeneracy
 of the four central eigenvalues
 (i.e., of the four smallest eigenvalues)
 to
 $
 E^{(EP4)}(A^{(EP4)},B^{(EP4)})=0
 $
is encountered,
at any even $N=2K \geq 8$, when one
considers the pair of polynomials
 \be
 Z_{( 2K)}(x_{+})=
 (K-1)x^4_{+}-2(K-1)x^2_{+}+ 2x_{+}+1\,,
\label{sy11a}
 \ee
and
 \be
 Z_{(- 2K)}(x_{-})=
 (K-1)x^4_{-}-2(K-1)x^2_{-}- 2x_{-}+1\,,
\label{sy11b}
 \ee
and, making use of their exact solvability, when one
determines the respective quadruplets of their roots $x_+=x_{+}^{(j)}$
and $x_-=x_-^{(j)}$
with $j=1,2,3,4$.
Whenever one finds that these roots are real
(which can be shown to be true for $K \geq 4$),
the
sets of the eligible EP4-supporting parameters
in $H^{(2K)}$
become specified by the formulae
 $$
 A^{(EP4)}=A^{(j)}_{\pm}=x_{\pm}
 \,,\ \ \ \ \
 B^{(EP4)}=B^{(j)}_{\pm}=(K-1)x^3_{\pm}-2(K-1)x_{\pm}
 {\pm}1\,,
\ \ \ \ j = 1,2,3,4
 \,.
 $$
\end{thm}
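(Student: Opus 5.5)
The plan is to reduce the statement to Corollary~\ref{sumalemat}, using the closed forms of Lemma~\ref{sudalema}. Recall that $P^{(2K)}_{(A,B)}(E)$ is a polynomial of degree $K$ in $x=E^2$. Requiring $c_K(A,B)=c_{K-1}(A,B)=0$ makes the polynomial divisible by $x^2=E^4$. The four central eigenvalues then coincide at $E=0$, and this is exactly the degeneracy claimed. So the whole task is to solve the coupled pair (\ref{ihlarov}) explicitly and to show that this yields the parametrization in the theorem.

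First I solve $c_K=0$. By (\ref{su6}) it reads $(1+AB)^2=A^2$, i.e. $1+AB=sA$ with a sign $s=\pm1$. Putting $A=x$ gives $B=(sx-1)/x$, which is legitimate since $x=0$ is impossible ($c_K(0,B)=1$). Next I substitute $(1+AB)^2=x^2$ into (\ref{su7}). The $A^2$ terms combine as $\tfrac{K(K-1)}{2}-\tfrac{(K-1)(K-2)}{2}=K-1$, so $c_{K-1}=0$ becomes
\[
B^2+(K-1)x^2-(2K-1)=0 .
\]
Inserting $B=(sx-1)/x$ and multiplying by $x^2$ gives
\[
(K-1)x^4-2(K-1)x^2-2s\,x+1=0 .
\]
For $s=-1$ this is $Z_{(2K)}$ of (\ref{sy11a}), and for $s=+1$ it is $Z_{(-2K)}$ of (\ref{sy11b}). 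To recover the printed formula for $B$, I use the equation itself to eliminate $1/x$: dividing $Z=0$ by $-x$ gives $-1/x=(K-1)x^3-2(K-1)x-2s$. Hence $B=s-1/x=(K-1)x^3-2(K-1)x-s$, which is the stated $B^{(EP4)}$ with the sign convention $\pm=-s$.

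The remaining step, which I expect to be the main (if modest) obstacle, is proving that all four roots are real for $K\ge4$. Since $Z_{(2K)}(x)=Z_{(-2K)}(-x)$, it suffices to treat $Z_{(2K)}$. I will use sign changes. We have $Z_{(2K)}(\pm\infty)>0$ and $Z_{(2K)}(0)=1>0$, while $Z_{(2K)}(-1)=-K<0$ and $Z_{(2K)}(1)=4-K$. For $K\ge5$ the signs at $-\infty,-1,0,1,+\infty$ alternate, which places one root in each of four disjoint intervals, hence four real roots. The borderline case $K=4$ needs a separate look, since there $x=1$ is itself a root. I will factor $Z_{(8)}=(x-1)(3x^3+3x^2-3x-1)$ and check the signs of the cubic at $-\infty,-1,0,1$. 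This gives three further real roots, distinct from $1$. Exact solvability of the quartics is then only a bonus: it gives closed forms, but reality does not depend on it.

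Finally, I will remark (as the paper does after Corollary~\ref{sumalemat}) that this proves fourfold degeneracy of the eigenvalues. Whether it is a genuine EP4, i.e. a single Jordan block, is not settled by this argument. I would support that only by noting that $H^{(2K)}$ is tridiagonal with nonzero off-diagonal entries, so each eigenvalue has geometric multiplicity one.
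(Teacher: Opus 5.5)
Your proof is correct and follows the paper's own route: solve $c_K=0$ as $B=\pm1-1/A$ and substitute into $c_{K-1}=0$, where the $A^2$ terms collapse to $(K-1)A^2$, to obtain the quartics. Your labelling $\pm=-s$ is the consistent one; the paper's proof attaches $B_\pm=\pm1-1/A$ to $Z_{(\pm2K)}$, which is sign-swapped, although its final formula for $B^{(EP4)}$ agrees with yours. Your sign-change argument (signs of $Z_{(2K)}$ at $-\infty,-1,0,1,+\infty$, plus the factorization $Z_{(8)}=(x-1)(3x^3+3x^2-3x-1)$) actually proves the reality of the roots for $K\ge4$, which the paper only asserts and illustrates numerically.
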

\begin{proof}.
At any $N=2K$,
the coupled pair of Eqs.~(\ref{hlarov})
can immediately be solved using the
elimination
of the unknown quantity
$B$ from the second item
of Eq.~(\ref{hlarov}).
This can be done by taking the
two alternative (viz., positive or negative) square roots of
the latter constraint which may be
characterized by the subscripted sign
to yield two options, viz.,
 $$
 B=B_\pm = \pm 1-1/A\,.
 $$
After the elimination and insertion of $B_\pm= B_\pm (A) $
into the first item of Eq.~(\ref{hlarov})
we
get the respective two
polynomials (\ref{sy11a}),  (\ref{sy11b}), and the two respective
relations $Z_{(\pm 2K)}(x_{\pm})=0$.
The quadruple degeneracy of the eigenvalues at $E^{(EP4)}=0$
is guaranteed.
\end{proof}

A detailed classification based on the explicit
proofs of the reality of the roots $x_{\pm}$
of the respective polynomials $Z_{(\pm 2K)}(x_{\pm})$
remains $K$-dependent (i.e., $N$-dependent).
Even though the polynomial is of the mere fourth order in $x$
(i.e., exactly solvable),
by far the most straightforward insight in the
$K$- or $N$-dependence of the roots
is provided
graphically. In
Figure \ref{globe1}
we displayed the six shapes of
polynomial functions
$Z(x)=Z_{(-2K)}(x)$
with $K=2,3,4,5$ and $6$.
As long as both of the local
minima of these curves decrease
with the growth of $K$,
the picture clearly shows the
convergence of  the roots $x=A_-^{(EP4)}$
in the limit of $K\to \infty$.
Numerically, this convergence
is also confirmed by Table~\ref{Pwe1}.

\begin{table}[h]
\caption{Eligible EP4-supporting real roots $A=A^{(EP4)}_-$
of
polynomials $Z_{(-2K)}(x)$
and their $K \to \infty$ convergence (\textcolor{black}{incomplete
illustrative
list,
see also Eq.~(\ref{sy11b})
and Figure \ref{globe1}}).
For a completion of the list, the opposite-sign values
$A^{(EP4)}_+=-A^{(EP4)}_-$
would only have
to be added.
}
\vspace{0.21cm}
 \label{Pwe1}
\centering
\begin{tabular}{||c||c|c|c|c||}
 \hline \hline
 $K$
 &&&&\\
  \hline \hline
  2&-1.683771565& -0.3715069740
    &-
  &-
  \\
  3&-1.560602400 & -0.3133098559
  &-
  &-
  \\
  4&-1.514868938 & -0.2776482755 & 0.7925172140 & 1.000000000
    \\
  5&-1.490937129
  &-0.2527091086
  &0.5611034016
  &1.182542836
  \\
  6& -1.476207086& -0.2339114273& 0.4652866201& 1.244831894
  \\
  7&-1.466224803
  &-0.2190384827
  &0.4055099224
  &1.279753364
  \\
  \hline \hline
  $\infty$&-$\sqrt{2}$
  &0
  &0
  &$\sqrt{2}$
  \\
  \hline \hline
\end{tabular}
\end{table}

The values of $B=B^{(EP4)}_-= - 1 - 1/A^{(EP4)}_-$
are unique.
For completeness, the Table could
have been also extended to contain the
other sets of roots with opposite sign and subscript, viz., with
$A^{(EP4)}_+ =-A^{(EP4)}_-$ as well as with the
related $B^{(EP4)}_+=  1 + 1/A^{(EP4)}_+$.

\subsection{Numerical detour}

Table \ref{Pwe1} together with Figure \ref{globe1} can be recalled
as a sample of the use of a solvable model admitting an internally
consistent extrapolation of an EPM-supporting quantum model to
\textcolor{black}{the large $N \gg 1$ or even towards the limit of}
$N=\infty$. This transition can be read \textcolor{black}{as a
tentative interpolation, approximation or even
just a mere simulation}
of a
singular differential operator, with the
\textcolor{black}{idea of its possible (though not yet
sufficiently well understood)}
regularization realized via a return to \textcolor{black}{the
simplified,
non-EPM (i.e.,
diagonalizable and,
as quantum Hamiltonians,
acceptable) discrete perturbed-EPM
$N<\infty$ models.}


\textcolor{black}{
In the light of
Refs.~\cite{corridors} or \cite{photonics},
for instance, at least a few highly-excited eigenvectors of
the $N \gg 1$ model tend to overlap
at the EPM singularity. Mimicking,
in this manner, the
behavior
of the deeply singular and  manifestly unacceptable
IEP-singular model and of its
eigenvectors, with their asymptotically-high-excitation
degeneracy
as characterized by
Eq.~(\ref{aprode}) above}.
Such an \textcolor{black}{argument, still just as merely intuitive as it is at present, might}
explain
the
asymptotic (i.e., high-energy) parallelization of
eigenvectors
via its EPM-\textcolor{black}{mimicked} reinterpretation.
Simultaneously, we also circumvent,
due to the finite-dimensional form
\textcolor{black}{of the EPM models},
the very essence of the
quantum-theoretical
unacceptability of
the
\textcolor{black}{IEP-singular} operators
as sampled by $H^{(ICO)}$:

\textcolor{black}{According to the
dedicated perturbation theory
as outlined in \cite{corridors},
the reason lies
in the existence of a small-perturbation-induced
and quasi-unitarity-compatible \cite{Geyer}
regularization of the EPM
singularity
in any $N < \infty$ model. For example,}
in the light of Figure \ref{globe1}
one can expect that at large $N = 2K \gg 1$
the quickest asymptotic $K \to \infty$ convergence
will be encountered
in the case of the leftmost EP4 root
$A=A^{(EP4)}_- \approx -\sqrt{2}$
while the
slowest asymptotic $K \to \infty$ convergence
can be expected to occur for the two cental roots.

For
illustrative purposes we will
pick up, therefore,
the value ``in between''
i.e., the
rightmost
root
$A^{(EP4)}_+
\approx +\sqrt{2}$
of $Z_{(-N)}(x)$ {\it alias}, due to the left-right symmetry,
the leftmost root
$A^{(EP4)}_-
\approx -\sqrt{2}$
of the other, plus-subscripted
polynomial
 $$
Z={x}^{4}-2\,{x}^{2}+ \left( 2\,x+1 \right) g
 $$
where we changed the normalization factor, and where we
introduced
a new, asymptotically small parameter
$g = 1/({K-1})$.

We will always have  $g\leq 1$
so that
for a purely numerical localization
of any one of the four EP4 roots
we may try to use an
asymptotic-series ansatz
 \be
  x=x(K)=-\sqrt {2}+{c_1}\,g+{c_2}\,{g}^{2}+{ c_3}\,{g}^{3}+\ldots\,.
 \label{tentok}
 \ee
As long as the
very choice of our ansatz has been based on
our knowledge of
the
solution
of equation $Z=0$
in the limit of small $g \to 0$ (i.e., of large $K \to \infty$,
cf. the last line of Table \ref{Pwe1}),
the
zero-order ${\cal O}(g^0)$ form
of equation $Z=0$ is an identity.
The next,
first-order ${\cal O}(g)$ component can be read as an explicit definition
of coefficient
 $$
 c_1=-\left(4-\sqrt {2}\right) /8  \approx
 -0.3232233045\,.
 $$
Similarly, the second-order ${\cal O}(g^2)$ constraint
leads to coefficient
 $$
 c_2= \left(29\,\sqrt {2}-32 \right) / {128} \approx
 0.07040776030
 $$
while on
the third-order level of precision ${\cal O}(g^3)$ we get
 $$
 c_3=- {7}\, \left(64 -43\,\sqrt {2} \right) /{1024} \approx
 -0.02179855231
 $$
etc.

\begin{table}[h]
\caption{The sample of convergence of the asymptotic-expansion
approximants of
the leftmost
root
of
polynomial $Z_{(+N)}(x)$
at the smallest $N=2K$
(due to symmetry, we could just copy the last column from Table \ref{Pwe1}).
}
\vspace{0.21cm}
 \label{Pwe1x}
\centering
\begin{tabular}{||c||l|l|l||c||}
 \hline \hline
 $N$
 &first order&second order&third order& exact \\
  \hline \hline
   14
  & -1.466128& -1.46808&-1.4662292&-1.466224803
  \\
 12& -1.47604& -1.4788& -1.476216& -1.476207086
  \\
    10
  &-1.49062& -1.4950& -1.490959&-1.490937129
  \\
  8& -1.51413& -1.5219& -1.51494&-1.514868938
    \\
  6 & -1.5582& -1.5758&-1.56095&-1.560602400
  \\
  4&-1.6670& -1.737& -1.6888&-1.683771565
  \\
  \hline \hline
\end{tabular}
\end{table}

In this context we have to emphasize that
although equation $Z=0$
is solvable in closed form,
it still makes sense to
construct and work with
our
``redundant'' asymptotic expansion (\ref{tentok})
because
 (a) all of its coefficients can be given a closed,
exact form, and
  (b) even after its drastic truncation,
the expansion yields
a fairly reasonable numerical precision
even when $N=4$ and $g=1$ is not too small
(see Table \ref{Pwe1x}).

\section{Two-parametric models
with odd $N=2K+1 $ \label{patabe}}

\subsection{Algebraic constructions at small
$K$}

In paper \cite{jmp1}
we described, thoroughly, the odd$-N$ models with $K=1$
(i.e., the one-parametric case,
see section IV of {\it loc. cit.}) and with $K=2$
(which is the simplest two-parametric case
yielding EPM with $M=5$, see subsection V. C in {\it loc. cit.}).
This allows us to
start directly from the odd$-N$ system with $K=3$.

\subsubsection{$K=3$ and $N=7$.}

Hamiltonian
 $$
 H^{(7)}(A,B)=
 \left[ \begin {array}{ccccccc} -iA&-1&0&0&0&0&0\\
 \noalign{\medskip}-1&-iB&-1&0&0&0&0\\
 \noalign{\medskip}0&-1&0&-1&0&0&0\\\noalign{\medskip}0
&0&-1&0&-1&0&0\\\noalign{\medskip}0&0&0&-1&0&-1&0\\\noalign{\medskip}0
&0&0&0&-1&iB&-1\\\noalign{\medskip}0&0&0&0&0&-1&iA\end {array}
 \right]\,
 $$
is, for us, an eligible candidate for an energy-representing
observable in quasi-Hermitian quantum theory \cite{Geyer}
only after
we manage to specify the physical domain ${\cal D}$ of its
real parameters $A$ and $B$ for which the spectrum remains real and
non-degenerate.

The task requires the
evaluation of secular polynomial
 $$
 P^{(7)}_{(A,B)}(E)
 ={{\it {E}}}^{7}- \left( 6-{A}^{2}-{B}^{2} \right) {{\it {E}}}^{5}-
 \left( -10+2\,{B}^{2}-2\,BA+4\,{A}^{2}-{B}^{2}{A}^{2} \right) {{\it
 {E}}}^{3}-
 $$
 $$
 -\left( 4+4\,BA-3\,{A}^{2}+2\,{B}^{2}{A}^{2} \right) {\it
 {E}}.
 $$
Thus, one obvious and constant central-energy root $E_0=0$
is accompanied by the sextuplet $E_{\pm m}$ with $m=1,2,3$.
In closed form, these energies
can be expressed using the well known Cardano formulae.
Although these formulae are already too long for a display in print,
they may be stored in the computer so that
any form of the graphical
or numerical representation of the energies
still remains to be an entirely routine task.
In the same sense, it is also entirely straightforward to
find the shape of the physical parametric domain ${\cal D}$,
most easily
by the use of requirement of the reality and non-negativity of the
squares of the eigenvalues
$E_{\pm m}^2$ at $m=1,2$ and $3$.
What one obtains is just a slightly modified
analogue of Figure~\ref{uglobe1}.

\subsubsection{$K=4$ and $N=9$.}

The same
techniques apply to the
$N=9$ model yielding the secular polynomial
 $$
 P^{(9)}_{(A,B)}(E)
 =
 {{\it {E}}}^{9}- \left( 8-{A}^{2}-{B}^{2} \right) {{\it {E}}}^{7}-
 \left( -21+4\,{B}^{2}-2\,BA+6\,{A}^{2}-{B}^{2}{A}^{2} \right) {{\it
{E}}}^{5}-
 $$
 $$-\left( 20-3\,{B}^{2}+8\,BA-10\,{A}^{2}+4\,{B}^{2}{A}^{2}
 \right) {{\it {E}}}^{3}- \left( -5+4\,{A}^{2}-6\,BA-3\,{B}^{2}{A}^{2}
 \right) {\it {E}}\,.
 $$
Again, the use of the
(still existing)
closed formulae
would be impractical.
For the same reason,
also the determination of the
boundaries of the physical parametric domain ${\cal D}$
would be a purely numerical
task.

For the purposes of the description of the energy-level mergers,
fortunately,
one only has to know the boundaries of ${\cal D}$
in a small vicinity of the EPM singularity.
This makes the use
of approximate methods sufficient and efficient.


\subsection{Graphical and numerical constructions}

We have noticed that
an enormous growth of the complexity
of the formulae already
emerges at $N=2K+1$ with $K$ as small as three.
In contrast to
the above-described elementary
elimination of $B$ from Eq.~(\ref{su6}) at
an arbitrary even $N=2K$,
the odd$-N$ version of constraint $c_K(A,B)=0$ ceases
to offer a sufficiently elementary
elimination of one of the parameters. A
full-fledged computer-assisted
elimination technique
must be used to solve the system of the two
coupled polynomial algebraic equations~(\ref{hlarov}).

%
\begin{figure}[h]                    
\begin{center}                         
\epsfig{file=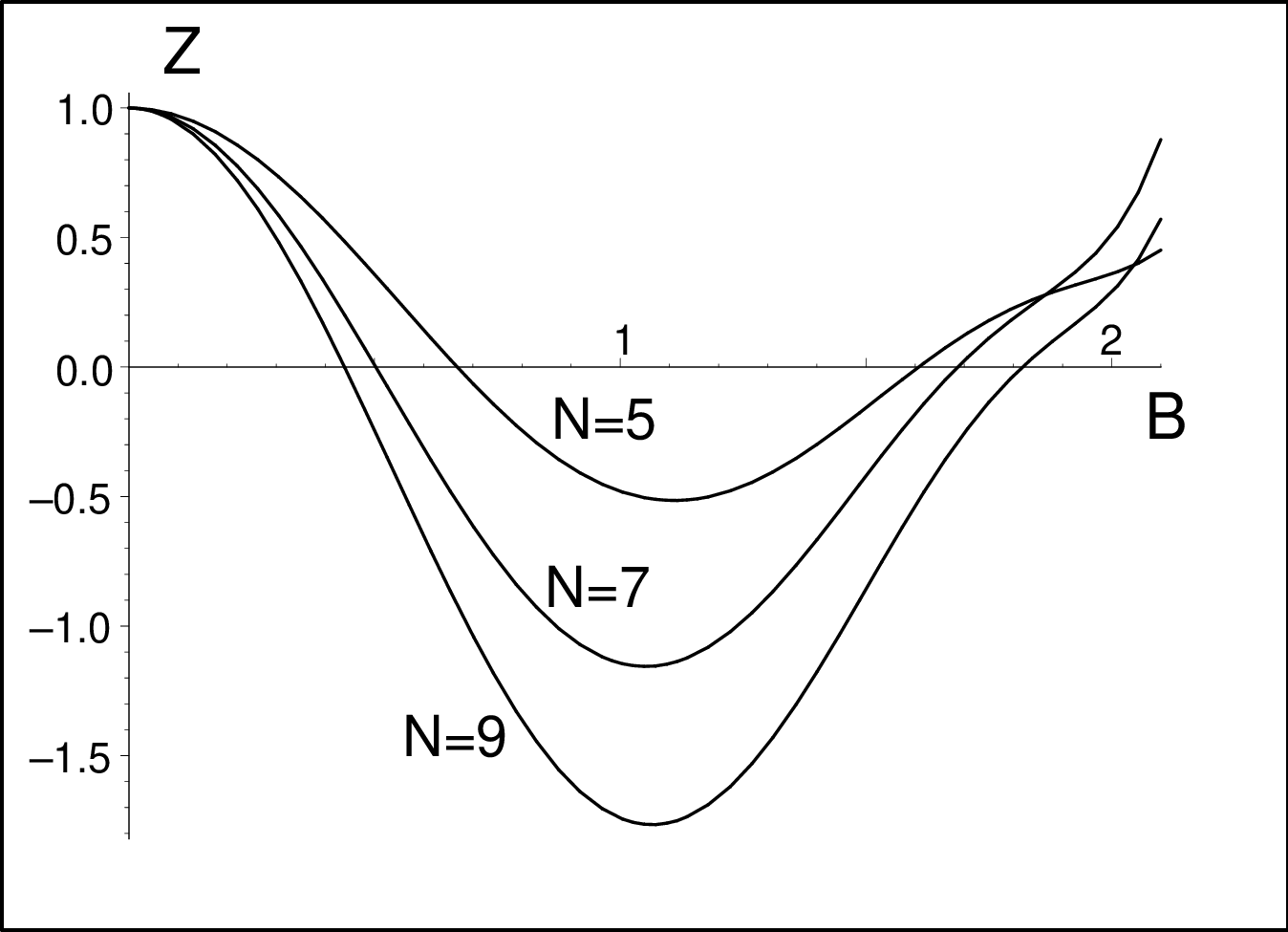,angle=270,width=0.35\textwidth}
\end{center}    
\caption{Graphical determination of the positive pairs of roots
$B=B^{(EP5)}$ of \textcolor{black}{the renormalized
polynomials $Z=Z_{(5)}(B^2)/25$,  $Z_{(7)}(B^2)/49$
and $Z_{(9)}(B^2)/900$
as defined by Eqs.~(\ref{je1}), (\ref{je2}) and (\ref{je3}), respectively}.
 \label{globe2}}
\end{figure}

Still,
we can rely upon the computer-generated
formulae.
What is encouraging is that
if one tries to formulate an odd-$N$
analogue of Theorem \ref{EP4s},
one still arrives at
certain strict analogues of
the respective auxiliary polynomials~(\ref{sy11a}) and~(\ref{sy11b})
which have to vanish at the EPM singularity.
After an abbreviation $B^2=y$ we get
the rule $Z_{(2K+1)}(y^{(EPM)})=0$
with
 \be
 Z_{(5)}(y)={y}^{4}-12\,{y}^{3}+50\,{y}^{2}-76\,{y}+25\,,
 \label{je1}
 \ee
 \be
 Z_{(7)}(y)=4\,{y}^{4}-44\,{y}^{3}+169\,{y}^{2}-234\,{y}+49\,,
 \label{je2}
 \ee
 \be
 Z_{(9)}(y)=81\,{y}^{4}-936\,{y}^{3}+3748\,{y}^{2}-5360\,{y}+900\,,
 \label{je3}
 \ee
etc. The shapes of \textcolor{black}{the latter three} curves are
displayed
in Figure~\ref{globe2},
\textcolor{black}{with the respective roots
listed in} Table~\ref{Pwe2a}.

\begin{table}[h]
\caption{Numerical zeros of functions $Z=Z_{(N)}(B^2)$ of
Figure~\ref{globe2}.
}
\vspace{0.21cm}
 \label{Pwe2a}
\centering
\begin{tabular}{||c||c|c||}
 \hline \hline
 $N$
 &\multicolumn{2}{c|}{\rm $B=B^{(EP5)}$ }\\
  \hline \hline
  5&0.6683178062 & 1.607208567
  \\
  7&0.5024794009 & 1.686679121
  \\
  9&0.4388960232 & 1.818687904
  \\
  \hline \hline
\end{tabular}
\end{table}

In contrast to the
left-right asymmetry of the curves at $N=2K$, their odd$-N$
analogues are
left-right symmetric
(that's why just their
right-semi-axis halves are displayed in Figure \ref{globe2}).
This implies that
the roots of
the new auxiliary polynomials $Z_{(2K+1)}(y)$
now define the eligible EPM parameters
as the pairs of square roots $B=B_\pm^{(EP5)} = \pm \sqrt{y}$.

The good news is that the EPM construction
is
reduced to
the search for roots of a
polynomial of the fourth order in $y$.
This
means that these roots (cf. Table \ref{Pwe2a}))
as well as the further EP5 parameters
are exact.

\begin{figure}[h]                    
\begin{center}                         
\epsfig{file=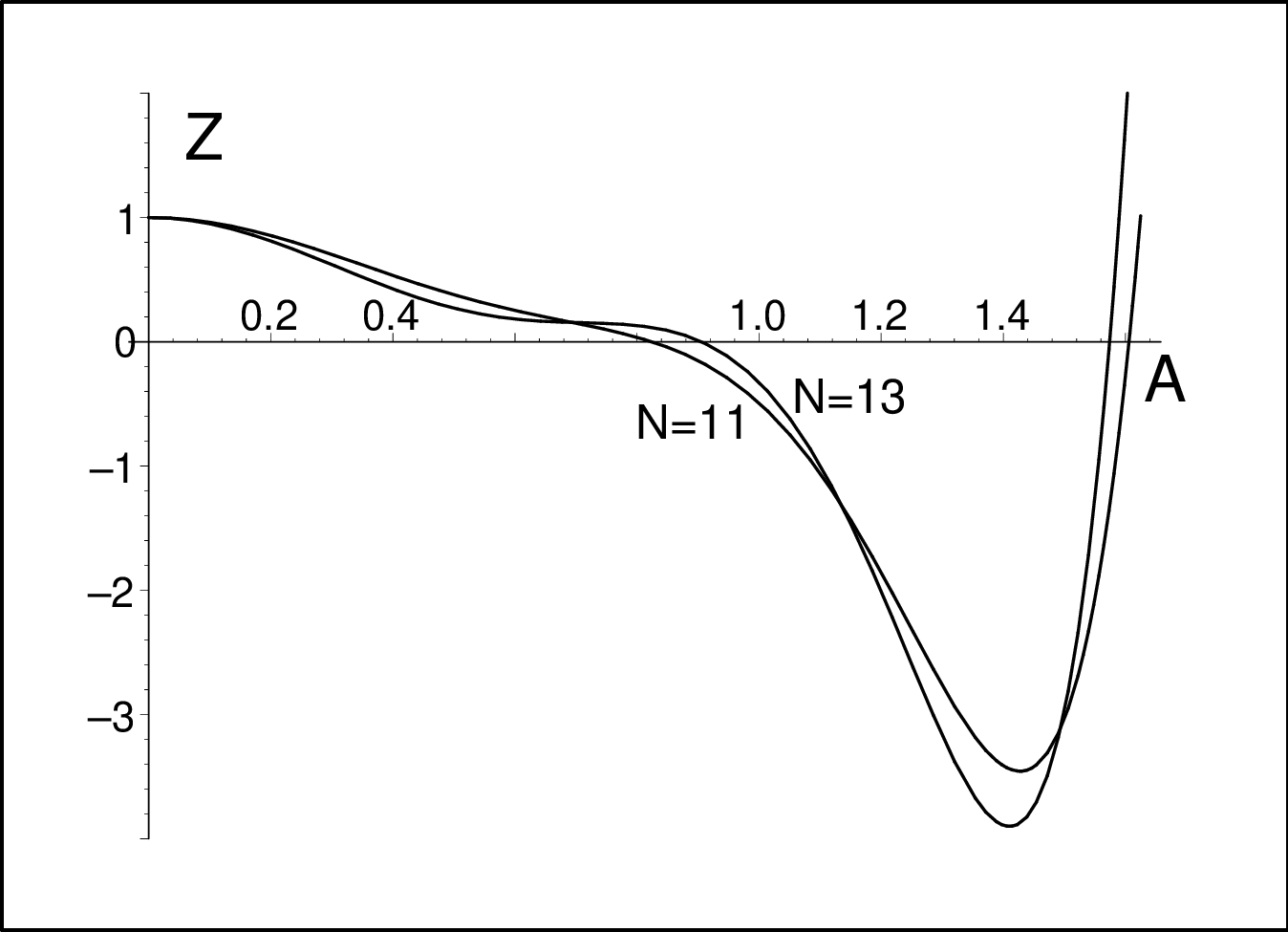,angle=270,width=0.35\textwidth}
\end{center}    
\caption{Graphical determination of the positive pairs of roots
$A=A^{(EP5)}$ of
\textcolor{black}{certain computer-generated
polynomials $Z(A)$
which are,
at $N=11$ and $13$,
not suitable for a printed display anymore
}
(see also Table~\ref{Pwe2b}).
 \label{globe3}}
\end{figure}

A few less pleasant
complications emerge
at the larger $K$s. In
contrast to the even$-N$
constructions,
the
higher-$K$ polynomials
$Z_{(2K+1)}(y)$
appear to be extrapolation-unfriendly, and they
do not
look sufficiently elementary anymore. Moreover, their
optimal forms
must be computer-generated at every separate value of $K$.

The task becomes more and more difficult with the growth of $K$.
The  computer-generated polynomials $
 Z_{(N)}(y)$
with $N = 11$ and $N=13$
cease to look nice
(so we do not display them here in print).
Even the rather routine computer-assisted numerical
localization of their EP5 roots
becomes more costly in both of its graphical forms
as sampled in Figure~\ref{globe3},
and in its purely numerical forms
as sampled in Table~\ref{Pwe2b}.

\begin{table}[h]
\caption{Numerical zeros of \textcolor{black}{the two}
functions of
Figure~\ref{globe3}.
}
\vspace{0.21cm}
 \label{Pwe2b}
\centering
\begin{tabular}{||c||c|c||}
 \hline \hline
 $N$
 &\multicolumn{2}{c||}{\rm $A=A^{(EP5)}$}\\
  \hline \hline
  11
    &0.8244776746 & 1.605982629
  \\
  13
   &0.9057668676 & 1.574311228
  \\
  \hline \hline
\end{tabular}
\end{table}


Fortunately, all of these calculations are still
routine. One can even decide to
move beyond the scope of the present paper, trying
to perform the computer-assisted
calculations at the larger $N$ and $M$.
Our readers may find a few
technical comments on such a possible future project
in Appendix~A.


\section{Summary\label{finis}}

Detailed analysis of a family of
comparatively elementary
bound-state  models
enabled us to
reveal and describe an intimate
correspondence
between the
asymptotic behavior of certain
non-Hermitian
potentials $V(x)$
and
a
non-Hermiticity-related
EP degeneracy of the
spectra.
As one  of byproducts, this led to a
deeper
understanding of the so called
intrinsic exceptional point (IEP)
feature of certain
maximally non-Hermitian but still
${\cal PT}-$symmetry-exhibiting operators.

The IEP singular behavior
has recently been noticed to
characterize a fairly broad family of ill-conceived
candidates for quantum Hamiltonians.
Even though such a property
(i.e., basically, the Riesz-basis
loss of the diagonalizability)
makes every such an operator
unacceptable as a quantum observable,
we proposed a partial remedy which lies in
a perturbation-theory based weakening if not
removal of
its singular nature.

The regularization process
has three parts.
In the first one we have
to follow the theory and we have to
circumvent the unbounded-operator
status of $H^{(IEP)}$ \cite{Geyer}.
In our \textcolor{black}{present} paper
the goal has been achieved
by the discretization of Schr\"{o}dinger equation, due to which
we could reinterpret operator $H^{(IEP)}$
as an
$N\to \infty$ limit
of a sequence of certain
$N$ by $N$ matrices $H^{(N)}$.

In the second step
we made use of the elementary matrix  nature of every $H^{(N)}$, and
we reinterpreted and generalized it as
parameter-dependent, $H^{(N)} \to H^{(N)}(g)$,
admitting that
the symbol $g$ may represent
an arbitrary multiplet of auxiliary variables, $g = \{A,B,\ldots\}$.
Even though our new matrices $H^{(N)}(g)$
forming a broader family
are, by construction, non-Hermitian,
their spectra should remain real and non-degenerate: This restricts
the variability of $g$
to a certain non-empty physical domain ${\cal D}$.

In the third step we recalled the fact that
the boundary of the latter physical parametric domain
is formed by the EP values at which one encounters
the mergers of energy levels.
A localization of these degeneracies
appeared to be the most difficult part of
our regularization recipe. Still,
in our models we managed
to localize the most relevant
part of the EP boundary $\partial {\cal D}$
at which the number $M$
of the merging energy levels
was maximal.

The latter maxima appeared to be reached at
the isolated EPs of order $M$ (EPMs),
and for the energies forming,
at a given matrix-dimension $N< \infty$, a
precise center of the spectrum.
In other words, the key point was that at every separate $N$
and in spite of
the singular nature of every
auxiliary operator $H^{(N)}(g)$ in the
$N-$ and $M-$dependent EP
limit of $g \to g^{(EPM(N))}$,
the physical unitary-evolution compatibility
of every $H^{(N)}(g)$
becomes
restored, for $g \in {\cal D}$,
in an arbitrarily small vicinity of
$g^{(EPM)}\in \partial {\cal D}$.

On these grounds it was possible to conclude that
in spite of the strictly unphysical nature of
our preselected operator $H^{(IEP)}$,
its
phenomenologically acceptable
vicinity
could be understood
as operators obtained as an $N\to \infty$
limit of their
phenomenologically acceptable $N < \infty$
predecessors $H^{(N)}(g)$,
provided only that one would be able to construct such a limit.

In our present paper we managed to perform
such a constructive regularization,
using  a subset
of matrices $H^{(N)}(g)$
which varied with
two-parametric $g=\{A,B\}$.
As a consequence,
the
finite$-N$ EPM-related central-spectral
parallelization of the
middle-of-spectrum  eigenstates of $H^{(N)}(g^{(EPM)})$
appeared successfully tractable as mimicking,
in the continuous-coordinate limit of $N \to \infty$,
the IEP-related
asymptotic
parallelization, i.e., as mimicking
one of the most relevant characteristics of the
IEP degeneracy singularity.

\textbf{Funding}
This research received no external funding.

\textbf{Data Availability Statement}
Data are contained within the article.

\textbf{Conflict of Interests}
The author declare no conflict of interest.



\section*{Appendix A. Secular polynomials}

In our recent study \cite{jmp1} of
the non-Hermitian but ${\cal PT}-$symmetric
quantum-Hamiltonian-representing
matrices $H^{(N)}(A,B,\ldots)$ of Eq.~(\ref{ba0t})
we revealed that
even after an ample use of computers,
the practical constructive determination of
the exceptional-point-representing
maxima  $A^{(ECM)}$, $B^{(ECM)}$, \ldots
of the parameters
is
complicated,
requiring a lot of computer's memory and time.

With the growth of $N$ and $M$, moreover,
the results become also
difficult to display in print.
In {\it loc. cit.},
this was the reason why
we only
described, in full detail, the simplest
two-parametric small-matrix models with $N \leq 5$
and $M \leq 5$.
Only
a sketchy note has been added
on the $N=6$
Hamiltonian with three
non-vanishing parameters
$A$, $B$ and $C$ yielding $M=6$.
The
sudden emergence of difficulties
persuaded us that
one only has to rely upon
the computer-stored results
at $N\geq 6$.

We only managed to overcome
such a rather strong skepticism
when we decided to test,
in our present paper,
the feasibility of the
constructions
in the
only remaining unexplored subset
of the models with the mere pair of parameters
$A$ and $B$ and with $M \leq 5$ and arbitrary $N$.
We succeeded, so that our skepticism
was softened.

At present
we
started
believing that at least some of the
technical obstacles
will be circumvented using some amended versions of the
Gr\"{o}bner-elimination techniques.
A practical constructive verification of such a belief is
just a challenge and open problem at present,
forming a background for a deeper study of the
solvability of the coupled polynomial algebraic equations
for the ECM maxima  $A^{(ECM)}$, $B^{(ECM)}$, \ldots
in the nearest future.

In this
setting it makes sense to list the
first few samples of secular polynomials which are
associated with matrices (\ref{ba0t}) at $N=7$ and $N=8$.

\subsection*{A.1. Three-parametric model with $N=7$}

In the case of models with $A \neq 0$, $B \neq 0$ and $C \neq 0$
and with arbitrary $N$,
our present two Lemmas and Corollary \ref{sumalemat}
would have to be properly generalized.
Using the same formal representation
of the corresponding secular polynomial as above
(cf. Eq.~(\ref{mikula})),
one would have to
replace our present fundamental algebraic set (\ref{hlarov})
by the triplet
 \be
 c_{K-2}(A,B,C)=0\,,
 \ \ \ \ \ \
 c_{K-1}(A,B,C)=0\,,
 \ \ \ \ \ \ c_{K}(A,B,C)=0\,.
 \label{ehlarov}
 \ee
We expect that such a  set of coupled polynomial
equations for  $A = A^{(EPM)}$,
$B = B^{(EPM)}$ and  $C=C^{(EPM)}$
might still be solvable
by a suitable and commercially accessible
Gr\"{o}bner elimination technique.
At the same time, the test as performed at $N=6$
in \cite{jmp1} persuaded us that
the standard Gr\"{o}bnerian reduction of the set (\ref{ehlarov})
to a single polynomial
would lead to such a high degree of this polynomial
that one could contemplate a direct numerical solution
of the coupled set (\ref{ehlarov}) itself.

For the latter purpose the choice of $N=7$
(i.e., of $K=3$)
would immediately lead to the
explicit form of relations (\ref{ehlarov}) where
one only has to insert
 $$
 c_3=
 -4+{C}^{2}-4\,BA+2\,CA+2\,{C}^{2}BA+3\,{A}^{2}+2\,CB{A}^{2}-2\,{B}^{2}
 {A}^{2}+{B}^{2}{C}^{2}{A}^{2}\,,
 $$
 $$
 c_2=
 10-2\,{B}^{2}+2\,BA-2\,{C}^{2}-4\,{A}^{2}+{B}^{2}{A}^{2}+{C}^{2}{A}^{2
 }+2\,CB+{C}^{2}{B}^{2}\,,
 $$
 $$
 c_1=
 -6+{A}^{2}+{C}^{2}+{B}^{2}\,.
 $$
A verification of the hypothesis that such an approach would
yield better results than the Gr\"{o}bner basis approach
already lies beyond the scope of the present paper.

\subsection*{A.2.  Three-parametric model with $N=8$}

In the same spirit as above, the choice of the next value of $K=4$ and
of the even $N=8$ yields
the secular polynomial with coefficients
 $$
 c_4=
1+2\,BA-{C}^{2}-2\,CA-2\,{C}^{2}BA-{A}^{2}-2\,CB{A}^{2}+{B}^{2}{A}^{2}
-{B}^{2}{C}^{2}{A}^{2}\,,
 $$
$$
 c_3=-10-6\,BA+3\,{C}^{2}-2\,CB+2\,CA+2\,{C}^{2}BA+{B}^{2}-{C}^{2}{B}^{2}+6
\,{A}^{2}+2\,CB{A}^{2}-3\,{B}^{2}{A}^{2}-{C}^{2}{A}^{2}+
$$
$$
+{B}^{2}{C}^{2}
{A}^{2}\,,
 $$
 $$
 c_2=15-3\,{B}^{2}+2\,BA-3\,{C}^{2}-5\,{A}^{2}+{B}^{2}{A}^{2}+{C}^{2}{A}^{2
}+2\,CB+{C}^{2}{B}^{2}\,,
 $$
 $$
 c_1=-7+{A}^{2}+{B}^{2}+{C}^{2}\,.
 $$
One may expect the existence of the EPM degeneracy
with $M=6$.
The guarantee and localization
of such a degeneracy
will again
require the
(computer-provided) solution of the triplet of Eqs.~(\ref{ehlarov})
so that, for the purpose, our knowledge of $c_1$ remains redundant.

\subsection*{A.3. Four-parametric model with $N=8$}

In the four-parametric models
the control of the existence of the
EPM extreme of non-Hermiticity with $M=8$
is provided
by the quadruplet of the coupled polynomial equations
 $$
 c_{K-3}(A,B,C,D)=0\,,
 \ \ \ \ \ \
 c_{K-2}(A,B,C,D)=0\,,
 $$
 \be
 \ \ \ \ \ \
 c_{K-1}(A,B,C,D)=0\,,
 \ \ \ \ \ \ c_{K}(A,B,C,D)=0\,.
 \label{eehlarov}
 \ee
At $N=8$
and $\{A,B,C,D\} \in {\cal D}$
the exact bound-state-energy roots
of secular polynomial
 $
 P^{(8)}_{(A,B,C,D)}(E)
 $
may still be defined algebraically and in closed form,
in principle at least.
In contrast, we have no estimate
concerning the computer time needed
for the solution of the EPM-determining set (\ref{eehlarov}).

At the present choice of $N=2K=8$, the separate items of
this set of equations already become too long to
be printed without abbreviations.
Thus, we will decompose $c_j=k_0+k_1\,D+k_2\,D^2$
and obtain
 $$
 c_4=1+2\,{D}^{2}CA+4\,DCBA+2\,{D}^{2}{C}^{2}BA+2\,BA-{C}^{2}-2\,CA+2\,DA+
 $$
 $$+2
 \,DC-2\,{C}^{2}BA-{A}^{2}-2\,CB{A}^{2}+{B}^{2}{A}^{2}-{B}^{2}{C}^{2}{A
 }^{2}+{D}^{2}{C}^{2}+2\,DB{A}^{2}+2\,{B}^{2}DC{A}^{2}+{B}^{2}{D}^{2}{C
 }^{2}{A}^{2}+
 $$
 $$
 +2\,{D}^{2}CB{A}^{2}+{D}^{2}{A}^{2}
 $$
i.e.,
 $$
 k_0=1+2\,BA-{C}^{2}-2\,CA-2\,{C}^{2}BA-{A}^{2}-2\,CB{A}^{2}+{B}^{2}{A}^{2}
-{B}^{2}{C}^{2}{A}^{2}\,,
 $$
 $$
 k_1=4\,CBA+2\,A+2\,C+2\,B{A}^{2}+2\,{B}^{2}C{A}^{2}
 $$
and
 $$
 k_2=2\,CA+2\,{C}^{2}BA+{C}^{2}+{B}^{2}{C}^{2}{A}^{2}+2\,CB{A}^{2}+{A}^{2}\,.
 $$
Similarly, in
 $$
 c_3=-10+2\,{D}^{2}BA+2\,{D}^{2}CB+4\,{D}^{2}-6\,BA+3\,{C}^{2}-2\,CB+2\,CA+
2\,DB-4\,DC+
$$
$$+2\,{C}^{2}BA+{B}^{2}-{C}^{2}{B}^{2}+
 $$
 $$+6\,{A}^{2}+2\,CB{A}^{2
}+{C}^{2}{D}^{2}{B}^{2}+2\,DC{B}^{2}-3\,{B}^{2}{A}^{2}-{C}^{2}{A}^{2}+
{B}^{2}{C}^{2}{A}^{2}+{B}^{2}{D}^{2}{A}^{2}+
 $$
 $$+{C}^{2}{D}^{2}{A}^{2}+2\,D
C{A}^{2}-2\,{D}^{2}{C}^{2}-2\,{D}^{2}{A}^{2}
 $$
we make use of abbreviations
 $$
 k_0=-10-6\,BA+3\,{C}^{2}-2\,CB+2\,CA+2\,{C}^{2}BA+{B}^{2}-{C}^{2}{B}^{2}+6
\,{A}^{2}+2\,CB{A}^{2}-3\,{B}^{2}{A}^{2}-{C}^{2}{A}^{2}+
$$
$$
+{B}^{2}{C}^{2}
{A}^{2}\,,
 $$
 $$
 k_1=2\,B-4\,C+2\,C{B}^{2}+2\,C{A}^{2}
 $$
and
 $$
 k_2=2\,BA+2\,CB+4+{C}^{2}{B}^{2}+{B}^{2}{A}^{2}+{C}^{2}{A}^{2}-2\,{C}^{2}-
2\,{A}^{2}\,.
 $$
Next, for
 $$
 c_2=15-3\,{B}^{2}+2\,BA-3\,{C}^{2}-4\,{D}^{2}-5\,{A}^{2}+{B}^{2}{A}^{2}+{C
}^{2}{A}^{2}+{D}^{2}{A}^{2}+
$$
$$
+2\,CB+{C}^{2}{B}^{2}+{D}^{2}{B}^{2}+2\,DC+
{D}^{2}{C}^{2}
 $$
we may set
 $$
 k_0=15-3\,{B}^{2}+2\,BA-3\,{C}^{2}-5\,{A}^{2}+{B}^{2}{A}^{2}+{C}^{2}{A}^{2
}+2\,CB+{C}^{2}{B}^{2}\,
 $$
and
 $$
 k_1=2C
 \,,\ \ \ \
 k_2=-4+{A}^{2}+{B}^{2}+{C}^{2}
 $$
while, finally, our last formula for
 $$
 c_1=-7+{A}^{2}+{B}^{2}+{C}^{2}+{D}^{2}
 $$
does not require any auxiliary abbreviations.

\end{document}